\def\wid{\check{{\cc@style\underline{\mskip9.5mu}}}}
\def\Wideubar{\underaccent{{\cc@style\underline{\mskip6mu}}}}
\def\wideubar{\underaccent{{\cc@style\underline{\mskip9.5mu}}}}
\def\Wideubar{\underaccent{{\cc@style\underline{\mskip6mu}}}}
\def\widebar{\accentset{{\cc@style\underline{\mskip9.5mu}}}}
\def\Widebar{\accentset{{\cc@style\underline{\mskip6mu}}}}
\newtheorem{lemma}{Lemma}
\newtheorem{theorem}{Theorem}
\newtheorem{definition}{Definition}
\theoremstyle{remark}
\def\ccalT{{\ensuremath{\mathcal T}}}
\begin{document}
\title{Compressive Phase Retrieval via\\
	Reweighted Amplitude Flow
}

\author{
 Liang Zhang,~\IEEEmembership{Student Member,~IEEE,}	 Gang Wang,~\IEEEmembership{Student Member,~IEEE,}\\ 
 Georgios B. Giannakis,~\IEEEmembership{Fellow,~IEEE}, and  Jie Chen,~\IEEEmembership{Senior Member,~IEEE}
	\thanks{The work of L. Zhang, G. Wang, and G. B. Giannakis in this paper was partially supported  by NSF grants 1423316, 1442686, 1508993, and 1509040. The work of J. Chen was partially supported by the National Natural Science Foundation of China grants U1509215, 61621063, and the Program for Changjiang Scholars and Innovative Research Team in University (IRT1208).
L. Zhang, G. Wang, and G. B. Giannakis are with the Digital Technology Center and the Department of Electrical and Computer Engineering at the University of Minnesota, Minneapolis, MN 55455, USA.  G. Wang is also with the State Key Laboratory of Intelligent Control and Decision of Complex Systems, Beijing 100081, P. R. China. 
J. Chen is with the School of Automation and the State Key Laboratory of Intelligent Control and Decision of Complex Systems, Beijing Institute of Technology, Beijing 100081, P. R. China.
E-mails: \{zhan3523, gangwang, georgios\}@umn.edu; chenjie@bit.edu.cn.}
}

\maketitle

\allowdisplaybreaks

\begin{abstract}
The problem of reconstructing a sparse signal vector from magnitude-only measurements (a.k.a., compressive phase retrieval), emerges naturally in diverse applications, but it is NP-hard in general. Building on recent advances in nonconvex optimization, this paper puts forth a new algorithm that is termed \emph{compressive reweighted amplitude flow} and abbreviated as CRAF, for compressive phase retrieval. Specifically, CRAF operates in two stages. The first stage seeks a sparse initial guess via a new spectral procedure.
In the second stage, CRAF implements a few hard thresholding based iterations using reweighted gradients. 
When there are sufficient measurements, CRAF provably recovers the underlying signal vector exactly with high probability under suitable conditions.
Moreover, its sample complexity coincides with that of the state-of-the-art procedures. Finally, substantial simulated tests showcase remarkable performance of the new spectral initialization, as well as improved exact recovery relative to competing alternatives.

\end{abstract}

\begin{keywords}
Nonconvex optimization, model-based hard thresholding, iteratively reweighting,  linear convergence to the global optimum
\end{keywords}

\section{Introduction}\label{sec:intro}
Phase retrieval (PR) refers to the task of reconstructing a signal vector from its phaseless measured linearly transformed entries. It emerges naturally in a wide range of engineering and physics applications such as X-ray crystallography, astronomy, and coherent diffraction imaging~\cite{spm2016eldar}, \cite{siam2015candes}. In these setups, the physical sensors can only record the density (the number of photons) of the light waves, but not their phase. This missing phase information renders general phase retrieval  ill-posed. In fact, it has been established that reconstructing a discrete, finite-duration signal vector from its Fourier transform magnitudes is generally NP-complete \cite{nphard1}. 
 To obtain useful solutions, additional assumptions have to be made, which include  (block) sparsity of underlying signal vectors~\cite{1982fienup, gespar, sparta, as2016clm}, non-negativity \cite{1982fienup,spm2016eldar}, and random Gaussian measurements~\cite{altmin, wf, twf, taf,raf,reshaped1,2017yxc}.

A number of phase retrieval  approaches have been developed so far, a sample of which are reviewed next. Alternating projection methods were advocated in \cite{gerchberg}, \cite{1978fienup}. By means of matrix-lifting and upon dropping the nonconvex rank constraint, convex semidefinite programs (SDP) were formulated \cite{phaselift,phasecut}. Minimizing the least-squares or least-absolute-value loss, several iterative solvers were pursued, namely those abbreviated as AltMinPhase \cite{tsp2015njs}, Wirtinger flow (WF) \cite{wf, twf, reshaped1, pwf, as2016clm}, amplitude flow \cite{taf, staf, raf, nips2016wg},  
and composite optimization \cite{duchi2017solving}. Convex phase retrieval approaches without matrix lifting can be found in \cite{phasemax,sparsephasemax}. We also recently developed a reweighted amplitude flow (RAF) algorithm which benchmarks the numerical performance of phase retrieval of signal vectors from Gaussian random measurements \cite{raf}.

The aforementioned phase retrieval approaches do not exploit possible structural information of the underlying signal vector, and they require for exact recovery that the number of measurements be on the order of the dimension of the vector \cite{wf,raf}. This number in large-scale high-resolution imaging applications is on the order of millions, rendering such algorithms inefficient. The signal vectors or their feature maps in many practical setups however, are naturally sparse
or admit an (approximately) sparse representation after certain known and
deterministic linear transformations have been applied~\cite{spm2016eldar}. This prior information can be critical in reducing the number of measurements required by general phase retrieval  approaches, and has prompted the development of various (block) sparse phase retrieval  solvers. To obtain sparse solutions, the  $\ell_1$-regularized PhaseLift was solved in~\cite{cprl}. Targeting nonconvex compressive phase retrieval formulations, a greedy algorithm was devised            \cite{gespar}, and the soft-thresholded Wirtinger flow (TWF) \cite{as2016clm} as well as the sparse truncated amplitude flow (SPARTA) \cite{sparta} was developed; see also~\cite{2017hedge} for the (block) compressive phase retrieval with alternating minimization (CoPRAM).

Building upon and going well beyond our precursors in \cite{raf,sparta}, this paper puts forth a new algorithm termed \emph{compressive reweighted amplitude flow} (CRAF) for (block)-sparse phase retrieval. Generalizing~\cite{sparta}, while further accounting for the structured sparsity pattern, the amplitude-based (block)-sparse phase retrieval problem is formulated, and it is solved in two stages, namely the initialization and the refinement stages. To enhance the initialization performance, a new sparse spectral initialization is developed, which judiciously assigns a \emph{negative or positive weight} to each sample. As such, the mean of the resultant initialization matrix features a larger gap between the first and the second eigenvalues, hence yielding improved performance as will be demonstrated in the numerical tests. 
 The second stage of CRAF successively refines the initialization by means of (model-based) hard thresholding iterations using \emph{reweighted} gradients. 
From the theoretical side, CRAF provably recovers the true signal vector at a linear rate under suitable conditions. 
Finally, numerical tests showcase the CRAF's improved recovery, and robustness to unknown sparsity relative to competing approaches.

The remainder of this paper is structured as follows. Section \ref{sec:prob} outlines the (block)-sparse phase retrieval  problem. Section~\ref{sec:alg} describes the algorithm, and establishes its convergence. 
Simulated tests are presented in~Section~\ref{sec:numerical}, and the proofs of the main theorems are given in Section~\ref{sec:proof}. Section~\ref{sec:con} concludes the paper.

Regarding notation, lower- (upper-) case boldface letters stand for column vectors (matrices). Sets are represented by calligraphic letters,  e.g., $\mathcal{S}$, with the exception of $\ccalT$ as superscript denoting matrix or vector transposition. The cardinality of set $\mathcal{S}$ is given by $|\mathcal{S}|$. Symbol $\|\cdot\|_2$ is reserved for the Euclidean norm, whereas $\|\cdot\|_0$ for the $\ell_0$ (pseudo)-norm counting the number of nonzero entries in a vector. Operator $\lceil\cdot\rceil$ returns the smallest integer greater than or equal to the given scalar. The Gauss error function  ${\rm{erf}}(x)$ is defined as ${\rm{erf}}(x):=(1/\sqrt{\pi}) \int_{-x}^x e^{-\tilde{x}^2}d\tilde{x}$. For a positive integer $m$, $[m]$ denotes the index set $\{1,\,2,\, \ldots,\, m\}$. Finally, the ordered eigenvalues of matrix $\bm{X} \in \mathbb{R}^{n\times n}$ are given as  $\lambda_1(\bm{X})\ge \lambda_2(\bm{X})\ge \cdots \ge\lambda_n(\bm{X})$.

\section{Compressive Phase Retrieval}\vspace{-.em}\label{sec:prob}
The compressive phase retrieval  aims at recovering a sparse signal vector from a few magnitude-only measurements~\cite{gespar, sparta, as2016clm}. Mathematically, it can be described as follows: Given a small set of phaseless linear measurements
\begin{equation}\label{eq:quad}
\psi_i=|\langle\bm{a}_i,\bm{x}\rangle|,\quad 1\le i \le m
\end{equation}
in which $\{\psi_i\}_{i=1}^m$ are the observed magnitudes, and $\{\bm{a}_i\in\mathbb{R}^n\}_{i=1}^m$ the known sampling vectors, the goal is to recover a $(kB)$-sparse solution $\bm{x}\in\mathbb{R}^n$, namely $\|\bm{x}\|_0\le kB$ with $kB$ being the known sparsity level. 
To accommodate also the block-sparse signal vectors, the following terminology is useful. Suppose without loss of generality that $\bm x$ is split into $N_B$ blocks $\{\bm x_b\}_{b=1}^{N_B}$, namely one can write $\bm{x}:= [\bm{x}_1^\ccalT\, \cdots\, \bm{x}_{N_B}^\ccalT]^\ccalT$ \cite{tsp2017zwrg}. For notational brevity, let $\mathcal{N}_B:=\{1,\,\ldots, \,N_B\}$ denote the index set of all blocks, and  $\mathcal{B}_b$ collect all indices of the entries of $\bm x$ corresponding to the $b$-th block. Therefore, $\mathcal{B}_b \subseteq [n]$ for all $b\in\mathcal{N}_b$, where $[n]:= \{1, \,\ldots, \,n\}$ consisting of all indices of $\bm{x}$. 
 \begin{definition} [$k$-block-sparse vectors \cite{baraniuk2010model}]
The $k$-block sparse vectors refer to vectors
$\bm{x}= [\bm{x}_1^\ccalT\, \cdots\,\bm{x}_{N_B}^\ccalT]^\ccalT$ such that $\bm x_b = \bm 0$ for all $b \notin \mathcal{S}_B$, where $\mathcal{S}_B$ is a subset of $\mathcal{N}_B$ with cardinality $|\mathcal{S}_B| = k$. 
 \end{definition}
 For simplicity, we consider that each block of the signal vector has equal length, that is, $|\mathcal{B}_b |=  B$ for all $b\in\mathcal{N}_b$ with $BN_B=n$. 
 It is clear that when $B=1$, the block-sparse phase retrieval boils down to the ordinary or unstructured sparse phase retrieval. Accordingly, we will henceforth focus on developing recovery algorithms for a block-sparse signal vector. 

Adopting the least-squares criterion, the task of recovering a $k$-block sparse vector from $m$ magnitude-only measurements can be cast as  \cite{taf}
\begin{equation}
\label{eq:cost}
\underset{\bm{z}\in \mathcal{M}_B^k}{\text{minimize}}~~\ell(\bm{z}):=\frac{1}{2m}\sum_{i=1}^m\left(\psi_i-|\bm{a}_i^\ccalT\bm{z}|\right)^2
\end{equation}  
where $\mathcal{M}_B^k$ denotes the set of all $k$-block-sparse vectors of dimension~$n$. Because of the nonconvex objective and the combinatorial  constraint, the problem in~\eqref{eq:cost} is in general NP-hard, hence computationally intractable.

For analytical concreteness, we focus on the real Gaussian model, which assumes $\bm{x}\in\mathbb{R}^n$, and independent and identically distributed (i.i.d.)  sensing vectors follow $\bm{a}_i\sim\mathcal{N}(\bm{0},\,\bm{I}_n)$ for all $1\le i \le m$. When there are enough measurements, it is reasonable to assume existence of a unique (up to a global sign) $k$-block-sparse solution  $\{\pm\bm{x}\}$ to the quadratic system in \eqref{eq:cost}. The critical goal of this paper is to put forth simple and scalable algorithms that can provably reconstruct $\bm{x}$ from as few magnitude-only measurements as possible.

\section{Compressive Reweighted Amplitude Flow}
\label{sec:alg}
This section presents the two stages, namely the initialization and the gradient refinement stages of CRAF. To begin, the distance from any estimate $\bm{z}\in \mathbb{R}^n$ to the solution set $\{\pm\bm{x}\}\subseteq\mathbb{R}^n$ is defined as
${\rm dist}(\bm{z},\bm{x}):=\min\{\|\bm{z}+\bm{x}\|_2,\,\|\bm{z}-\bm{x}\|_2\}$.

\subsection{Sparse Spectral Initialization}\label{sub:init}
A modified spectral initialization that utilizes the information from all available data samples is delineated first. Relative to existing phase retrieval initializations suggested in \cite{wf,twf, taf,raf}, enhanced numerical performance is achieved by assigning judicious weights to all sampling vectors. Subsequently, the generalization of the new initialization procedure to compressive phase retrieval settings is justified.

\subsubsection{Spectral initialization}
Finding a good initialization is key in enabling strong convergence of iterative nonconvex optimization algorithms. Consider first the general phase retrieval, namely without exploiting the sparse prior information. Similar to past approaches, the new initialization entails estimating the norm  $\|\bm{x}\|_2$ as well as the directional vector $\bm{d}:=\bm{x}/\|\bm{x}\|_2$. Regarding the former, it has been well documented that the term $\hat{r}: = \sqrt{(1/m)\sum_{i=1}^m \psi_m^2}$ is an unbiased and tightly concentrated estimate of the norm $r:=\|\bm{x}\|_2$ when there are enough measurements \cite{wf}.
The challenge remains to estimate the direction $\bm{d}$, namely seek a unit vector $\hat{\bm {d}}$ that is maximally correlated with $\bm{d}$.

Among different initialization strategies, the procedure proposed in \cite{taf} proves successful in achieving excellent numerical performance in estimating $\bm{d}$; see also \cite{duchi2017solving} for robustified alternatives. However, the truncation therein discards the useful information carried over in a non-negligible portion of samples. 
To exploit all the data samples, the new spectral initialization obtains the wanted approximation vector as  
\begin{equation}
\label{eq:maxeig}
\hat{\bm{d}}:=\arg	\underset{\|\bm{z}\|_2=1}{\text{max}}~~
\bm{z}^\ccalT\Big(\frac{\lambda^-}{|{\mathcal{I}}^{-}|}\sum_{i\in{\mathcal{I}}^-}\bm{a}_{i}\bm{a}_i^\ccalT
+\frac{\lambda^+}{|{\mathcal{I}}^{+}|}\sum_{i\in \mathcal{I}^+}\bm{a}_{i}\bm{a}_i^\ccalT
\Big)\bm{z}
\end{equation}  
where $\lambda^-<0$ and  $\lambda^+>0$ are preselected coefficients, and the index sets ${\mathcal{I}}^{-}:= \{i\in [m]: \psi_i^2 \le \hat{r}^2/2\}$, and  ${\mathcal{I}}^{+}:=\{i \in [m]: \psi_i^2 \ge \hat{r}^2/2\}$.
It is worth pointing out that the judiciously devised index sets satisfy $\mathcal{I} = {\mathcal{I}}^{-} \cup {\mathcal{I}}^{+}$, suggesting full use of the available data samples.  With $\hat{r}$ and $\hat{\bm{d}}$ at hand, the initial estimate  of $\bm{x}$ can be obtained conveniently as $\bm{z}^0:= \hat{r}\hat{\bm{d}}$.

Intuitively, the initialization strategy in \eqref{eq:maxeig} can be justified as follows. Leveraging the rotational invariance of $\bm{a} \sim \mathcal{N}(\bm{0}, \bm{I})$, we have for any thresholds $\tau_1,\, \tau_2 \in [0,\,1]$:
\begin{align}
&\mathbb{E}\!\left[\bm{a}\bm{a}^\ccalT | \langle \bm{a}, \bm{d}\rangle^2\le \tau_1\right] \nonumber\\
&\quad\quad\quad\quad = \bm{I}_n - \bm{d}\bm{d}^\ccalT + \mathbb{E}[\langle \bm{a}, \bm{d}\rangle^2 | \langle \bm{a}, \bm{d}\rangle^2\le \tau_1] \bm{d}\bm{d}^\ccalT \label{eq:I+} \\
&\mathbb{E}\!\left[\bm{a}\bm{a}^\ccalT | \langle \bm{a}, \bm{d}\rangle^2\ge \tau_2\right] \nonumber\\
&\quad\quad\quad\quad= \bm{I}_n - \bm{d}\bm{d}^\ccalT + \mathbb{E}[\langle \bm{a}, \bm{d}\rangle^2 | \langle \bm{a}, \bm{d}\rangle^2\ge \tau_2] \bm{d}\bm{d}^\ccalT \label{eq:I-}.
\end{align}
 It has been proved in~\cite[Lemma 3.2]{duchi2017solving} that  $$\mathbb{E}\!\left[ \langle \bm{a}, \bm{d}\rangle^2 | \langle \bm{a}, \bm{d}\rangle^2\le \tau_1\right]\le \tau_1/3.$$ Therefore, the smallest eigenvalue of $\mathbb{E}[\bm{a}\bm{a}^\ccalT | \langle \bm{a}, \bm{d}\rangle^2\le \tau_1]$ satisfies
$$\lambda_n\big(\mathbb{E}[\bm{a}\bm{a}^\ccalT | \langle \bm{a}, \bm{d}\rangle^2\le \tau_1] \big) \leq \tau_1/3$$ whereas all other eigenvalues are
$$\lambda_i\big(\mathbb{E}[\bm{a}\bm{a}^\ccalT | \langle \bm{a}, \bm{d}\rangle^2\le \tau_1] \big)=1,\quad 1\le i\le n-1.$$ 

Similarly, one can establish the following lower bound for the second term  $\mathbb{E}[ \langle \bm{a}, \bm{d}\rangle^2 | \langle \bm{a}, \bm{x}\rangle^2\ge \tau_2]$ in~\eqref{eq:I-}.
\begin{lemma} \label{le:boundI+}
	Consider any nonzero signal vector $\bm{d} \in \mathbb{R}^n$ with $\|\bm{d}\|_2 = 1$. If $\bm{a} \sim \mathcal{N}(\bm{0}, \bm{I})$, then for any $\tau \ge 0$, it holds that    
	\begin{equation} \label{eq:ieqI+}
	\mathbb{E}\!\left[ \langle \bm{a}, \bm{d}\rangle^2\left | \langle \bm{a}, \bm{d}\rangle^2\right.\ge \tau\right]\ge 	\frac{6 -\tau\, \rm{erf}(\sqrt{\tau})}{6 -3 \,\rm{erf}(\sqrt{\tau})}.
	\end{equation}
\end{lemma}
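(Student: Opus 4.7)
The plan is to couple the rotational invariance of $\bm a\sim\mathcal N(\bm 0,\bm I)$ with the Duchi--Ruan estimate $\mathbb{E}[\langle\bm a,\bm d\rangle^{2}\mid\langle\bm a,\bm d\rangle^{2}\le\tau]\le\tau/3$ already cited in the paragraph preceding the lemma, linked through the law of total expectation. First, since $\|\bm d\|_{2}=1$ and $\bm a$ is standard Gaussian, $Y:=\langle\bm a,\bm d\rangle$ is itself standard Gaussian, so $\mathbb{E}[Y^{2}]=1$ and the tail probability $p:=\Pr(Y^{2}\le\tau)$ is a closed-form function of $\tau$ in terms of the error function defined at the end of Section~\ref{sec:intro}. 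This collapses the claim to a one-dimensional statement about $\mathbb{E}[Y^{2}\mid Y^{2}\ge\tau]$.

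Second, I would invoke the decomposition
\begin{equation*}
1\,=\,\mathbb{E}[Y^{2}]\,=\,\mathbb{E}[Y^{2}\mid Y^{2}\le\tau]\,p\,+\,\mathbb{E}[Y^{2}\mid Y^{2}\ge\tau]\,(1-p),
\end{equation*}
substitute the bound $\mathbb{E}[Y^{2}\mid Y^{2}\le\tau]\le\tau/3$, and solve for the quantity of interest. This immediately produces a preliminary lower bound whose numerator is affine in $\tau p$ and whose denominator is affine in $p$, precisely the algebraic shape that appears on the right-hand side of \eqref{eq:ieqI+}.

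The last step is to substitute the closed-form expression for $p$ in terms of the Gauss error function and rearrange so as to match the exact form $(6-\tau\,{\rm erf}(\sqrt{\tau}))/(6-3\,{\rm erf}(\sqrt{\tau}))$. I expect this matching to be the main obstacle, because the direct calculation naturally places an argument of the form $\sqrt{\tau/2}$ inside ${\rm erf}$, whereas the statement uses $\sqrt{\tau}$; reconciling the two requires an auxiliary monotonicity relaxation on ${\rm erf}$, applied in the direction that preserves the lower bound, together with careful tracking of the multiplicative constants that accompany it. Once this short bookkeeping is carried out, the lemma is a two-line consequence of the rotational-invariance reduction and the cited Duchi--Ruan inequality.
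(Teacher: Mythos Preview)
Your approach is exactly the paper's: reduce to the scalar $\tilde a=\langle\bm a,\bm d\rangle\sim\mathcal N(0,1)$, express $\mathbb{E}[\tilde a^{2}\mid\tilde a^{2}\ge\tau]$ through the complementary truncated moment (the paper does the same split via integrals rather than invoking total expectation by name), and then substitute the Duchi--Ruan bound $\mathbb{E}[\tilde a^{2}\mid\tilde a^{2}\le\tau]\le\tau/3$.

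Your flag about the error-function argument is well placed and in fact catches a slip in the paper's own derivation. With the paper's definition of ${\rm erf}$ one has $\Pr(\tilde a^{2}\le\tau)={\rm erf}(\sqrt{\tau/2})$, not $\tfrac12{\rm erf}(\sqrt\tau)$, yet the paper's displayed chain passes directly from $\dfrac{1-\int_{-\sqrt\tau}^{\sqrt\tau}\tilde a^{2}p(\tilde a)\,d\tilde a}{1-\int_{-\sqrt\tau}^{\sqrt\tau}p(\tilde a)\,d\tilde a}$ to $\dfrac{2-\mathbb{E}[\tilde a^{2}\mid\tilde a^{2}\le\tau]\,{\rm erf}(\sqrt\tau)}{2-{\rm erf}(\sqrt\tau)}$ without any intervening relaxation, which is not an identity. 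The monotonicity patch you propose is the right repair: for $\tau\le 3$ the map $p\mapsto(1-\tfrac{\tau}{3}p)/(1-p)$ is increasing on $(0,1)$ and ${\rm erf}(\sqrt\tau)/2\le{\rm erf}(\sqrt{\tau/2})$ for all $\tau\ge 0$, so replacing the true $p={\rm erf}(\sqrt{\tau/2})$ by the smaller ${\rm erf}(\sqrt\tau)/2$ only weakens the lower bound; for $\tau\ge 3$ the trivial estimate $\mathbb{E}[\tilde a^{2}\mid\tilde a^{2}\ge\tau]\ge\tau$ already dominates the right-hand side of \eqref{eq:ieqI+}. So your plan is correct, follows the same route as the paper, and is in fact more careful than the published proof on the one point that needs care.
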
	
\begin{proof}
	With $\tilde a := \langle \bm{a}, \bm{d}\rangle$, it holds that $\tilde a \sim \mathcal{N}(0,1)$. 
	Let $\tilde a'$ be a random variable with the same density as $\tilde a$, and $p(\tilde a)$ and  $p(\tilde a')$  denote the density of $\tilde a$ and $\tilde a'$, respectively. It follows that 
	\begin{align*}
	&\mathbb{E}\!\left[ \langle \bm{a}, \bm{d}\rangle^2 | \langle \bm{a}, \bm{d}\rangle^2\ge \tau\right]  =\mathbb{E}[\tilde{a}^2| \tilde{a}^2 \ge \tau]\\ 
	&= \mathbb{E}\left[\tilde{a}^2|~ |\tilde{a}| \ge \sqrt{\tau}\right] 
    =  \int_{\sqrt{\tau}}^{\infty} \frac{\tilde{a}^2p(\tilde{a})}{\int_{\sqrt{\tau}}^\infty p(\tilde{a}') d\tilde{a}'} d\tilde{a} 	\\
    &= \frac{1-\int_{-\sqrt{\tau}}^{\sqrt{\tau}}\tilde{a}^2p(\tilde{a}) d\tilde{a}}{1-\int_{-\sqrt{\tau}}^{\sqrt{\tau}} p(\tilde{a}') d\tilde{a}'} 
	=  \frac{2-\mathbb{E}[ \tilde{a}^2 |  \tilde{a}^2\le \tau]\,\rm{erf}{(\sqrt{\tau}})}{2-\rm{erf}{(\sqrt{\tau}})} \\
	&\ge 
	\frac{6 -\tau\, \rm{erf}(\sqrt{\tau})}{6 -3\, \rm{erf}(\sqrt{\tau})} 
	\end{align*}
	where the last inequality relies on~\cite[Lemma 3.2]{duchi2017solving}.
\end{proof}

To help understanding the  assertion of Lemma~\ref{le:boundI+}, taking $\tau = 0.5$ as an example,  we find $\mathbb{E}[ \langle \bm{a}, \bm{d}\rangle^2 | \langle \bm{a}, \bm{d}\rangle^2\ge 0.5] \ge 1.42$  by substituting the inequality $ \rm{erf}(\sqrt{0.5}) \ge 0.68$ into \eqref{eq:ieqI+}. Hence, it holds that $\lambda_1(\mathbb{E}[ \langle \bm{a}, \bm{d}\rangle^2 | \langle \bm{a}, \bm{d}\rangle^2\ge 0.5])\ge 1.42$, and  $\lambda_i\big(\mathbb{E}[\bm{a}\bm{a}^\ccalT | \langle \bm{a}, \bm{x}\rangle^2\ge 0.5] \big)=1,~1\le i \le n-1$. 
Subsequently, it can be deduced that for $\lambda^-<0$ and $\lambda^+>0$, the largest eigenvalue of 
$\lambda^-\mathbb{E}[\bm{a}\bm{a}^\ccalT | \langle \bm{a}, \bm{d}\rangle^2\le 0.5] +\lambda^+ \mathbb{E}[\bm{a}\bm{a}^\ccalT | \langle \bm{a}, \bm{d}\rangle^2\ge 0.5]$ is greater than or equal to $1.42 \lambda^++0.166\lambda^- $, and all other eigenvalues equal  $\lambda^++\lambda^-$. Therefore, once the sample mean matrix $\frac{\lambda^-}{|{\mathcal{I}}^{-}|}\sum_{i\in{\mathcal{I}}^-}\bm{a}_{i}\bm{a}_i^\ccalT
+\frac{\lambda^+}{|{\mathcal{I}}^{+}|}\sum_{i\in \mathcal{I}^+}\bm{a}_{i}\bm{a}_i^\ccalT$ is sufficiently close to its mean, it would be possible to estimate $\bm{d}$ with high accuracy  based on the matrix perturbation lemma in \cite[Corollary 1]{yu2014useful}, which is also included as Lemma~\ref{le:mp} in the Appendix for completeness. The aforementioned arguments speak for the effectiveness of the proposed initialization, whereas the next theorem quantifies rigorously the initialization estimation error ${\rm dist}(\bm{z}^0, \bm{x})$.
\begin{theorem}\label{th:int}
Let $\bm{z}_0 = \hat{r}\hat{\bm{d}}$ with $\hat{\bm{d}}$ obtained from~\eqref{eq:maxeig}. For any given constant $\delta_0 \in (0, 1)$,
there exists numerical constants $c_0>0$ and $C_0$ such that 
 the following holds 
$${\rm dist}(\bm{z}_0,\bm{x}) \le \delta_0 \|\bm{x}\|_2 $$	
with probability at least  $1-10\exp(-c_0 m)$ when $m \ge C_0n $.
\end{theorem}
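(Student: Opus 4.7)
The plan is to decompose the error into a magnitude component and a direction component, bounding each separately and combining them. Write $\bm{z}_0 = \hat r \hat{\bm d}$ and recall $\bm{x} = r\bm{d}$ with $r=\|\bm{x}\|_2$. A short triangle-inequality argument gives
\begin{equation*}
{\rm dist}(\bm{z}_0,\bm{x}) \le |\hat r - r| + r\cdot{\rm dist}(\hat{\bm d},\bm{d}),
\end{equation*}
so it suffices to show $|\hat r - r|\le \tfrac12\delta_0 r$ and ${\rm dist}(\hat{\bm d},\bm{d})\le \tfrac12\delta_0$ each with the advertised probability. For the magnitude, $\hat r^2$ is the average of $m$ i.i.d. $\chi_1^2$-type variables $(\bm{a}_i^\ccalT\bm{x})^2$ with mean $r^2$ and subexponential tails, so a standard Bernstein bound yields $|\hat r^2 - r^2|\le \varepsilon r^2$, and hence $|\hat r - r|\lesssim \varepsilon r$, with probability $1-2e^{-cm}$ whenever $m\gtrsim 1/\varepsilon^2$.

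The main work is controlling the direction. Set
\begin{equation*}
\bm{Y}:=\frac{\lambda^-}{|{\mathcal{I}}^{-}|}\sum_{i\in{\mathcal{I}}^-}\bm{a}_{i}\bm{a}_i^\ccalT + \frac{\lambda^+}{|{\mathcal{I}}^{+}|}\sum_{i\in \mathcal{I}^+}\bm{a}_{i}\bm{a}_i^\ccalT,
\end{equation*}
and $\bm{M}:=\mathbb{E}[\bm{Y}]$. Using \eqref{eq:I+}--\eqref{eq:I-}, Lemma~\ref{le:boundI+}, and \cite[Lemma 3.2]{duchi2017solving}, $\bm{M}$ has the form $\alpha\bm{I}_n + \beta\bm{d}\bm{d}^\ccalT$ with the computed eigengap $\beta = \lambda_1(\bm M)-\lambda_2(\bm M)>0$ (numerically $\ge 0.42\lambda^+ - 0.83\lambda^-$ for the threshold $\tau=1/2$), so $\bm{d}$ is the unique top eigenvector of $\bm{M}$. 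Then the Davis--Kahan perturbation bound, stated as Lemma~\ref{le:mp}, gives
\begin{equation*}
{\rm dist}(\hat{\bm d},\bm{d})\le \frac{\sqrt{2}\,\|\bm{Y}-\bm{M}\|}{\beta}.
\end{equation*}
It therefore remains to show $\|\bm Y-\bm M\|\le \tfrac12\delta_0\beta/\sqrt{2}$ with the claimed probability.

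Bounding $\|\bm Y-\bm M\|$ is the hard step because the index sets $\mathcal{I}^\pm$ are data-dependent (they involve $\hat r$ and the samples themselves). The plan is: (i) condition on the event that $\hat r^2$ is within $\varepsilon r^2$ of $r^2$, absorbing a negligible $2e^{-cm}$ probability; (ii) rewrite each summand as a truncated rank-one matrix $\bm{a}_i\bm{a}_i^\ccalT\mathbb{1}\{(\bm{a}_i^\ccalT\bm{x})^2\le \hat r^2/2\}$ (and the complementary one), so that after this conditioning the indicators lie in a deterministic interval of radii, and use a sandwich argument over the two extreme thresholds; (iii) for each fixed threshold the summands are i.i.d. subexponential rank-one matrices with bounded Orlicz norm, and a standard $\varepsilon$-net-plus-Bernstein covering argument over the unit sphere (as in \cite{wf,taf}) yields
\begin{equation*}
\Big\|\frac{1}{m}\sum_{i=1}^m \bm{a}_i\bm{a}_i^\ccalT\mathbb{1}\{\cdot\} - \mathbb{E}[\cdot]\Big\|\le \eta
\end{equation*}
with probability $1-2e^{-c_1 m}$ provided $m\ge C_1 n/\eta^2$; (iv) converting back from normalized $|{\mathcal I}^\pm|/m$ to true averages requires only a standard concentration of the Bernoulli counts $|\mathcal{I}^\pm|/m$. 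Choosing $\eta$ proportional to $\delta_0\beta$ and taking the union bound over the finite number of error events gives the desired failure probability $10e^{-c_0 m}$ for $m\ge C_0 n$. Combining with the magnitude bound completes the proof.
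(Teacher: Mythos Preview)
Your proposal is correct and follows essentially the same route as the paper: a magnitude/direction split, Davis--Kahan (Lemma~\ref{le:mp}) applied to the population initialization matrix, and a sandwich/bracketing argument with deterministic thresholds $(1\pm\epsilon)r^2/2$ to cope with the data-dependent $\hat r$. The paper's execution differs only in bookkeeping: it treats the two halves $\bm{M}^\pm$ separately, citing \cite[Prop.~2]{duchi2017solving} directly for $\|\bm{\Delta}^-\|_2$, and for $\|\bm{\Delta}^+\|_2$ writes out an explicit three-term decomposition $\bm{\Delta}_1^++\bm{\Delta}_2^++\bm{\Delta}_3^+$ controlled by five concrete events $\mathcal{E}_1,\ldots,\mathcal{E}_5$ (norm concentration, cardinality of the bracket gap, lower bound on $|\mathcal{I}_{+\epsilon}^+|$, operator norm over the bracket gap, and concentration on the inner set), whereas you fold these into a single ``sandwich plus Bernoulli-count'' sketch. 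One point to be careful with: your target matrix $\bm{M}=\mathbb{E}[\bm{Y}]$ is not literally the expectation of $\bm{Y}$ because of the random denominators $|\mathcal{I}^\pm|$; the paper sidesteps this by comparing directly to the conditional-expectation matrix $\bm{I}+\phi^\pm\bm{d}\bm{d}^\ccalT$ and absorbing the normalization mismatch into $\bm{\Delta}_3^+$, which is exactly what your step~(iv) would accomplish once written out.
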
  

For readability, the proof of Theorem~\ref{th:int} is deferred to Section~\ref{sec:thm1}. 
 Although the suggested initialization assumes a specific threshold $\hat{r}^2/2$ to split samples into $\mathcal{I}^-$ and $\mathcal{I}^+$, it is straightforward to incorporate two different thresholds $0\le \tilde{\tau}_1\leq \tilde{\tau}_2 \le 1$ such that  ${\mathcal{I}}^{-}:= \{i\in [m]: \psi_i^2 \le \tilde{\tau}_1\hat{r}^2\}$ and ${\mathcal{I}}^{+}:=\{i \in [m]: \psi_i^2 \ge \tilde{\tau}_2\hat{r}^2\}$. By appropriately selecting $\tilde{\tau}_1$ and $\tilde{\tau}_2$, the initialization performance can be further boosted. 
 It is worthing pointing out that the weak recovery performance of similar procedures has been studied in~\cite{mondelli2017fundamental}, which only provides guarantee for the case of $n\to\infty$.

\subsubsection{Support recovery}
The initialization procedure in~\eqref{eq:maxeig} is developed for general signal vectors $\bm{x}$, without leveraging the structural information that is present in diverse applications. When the vector is sparse, the required number of data samples to yield an accurate initialization can be reduced \cite{sparta}. Next, we demonstrate how to obtain a sparse initialization based on the procedure discussed in Section \ref{sub:init}. Similar to~\cite{sparta,2017hedge}, obtaining a sparse initialization entails first estimating the (block)-support of the underlying (block)-sparse signal vectors.

Specifically, define random variables $Z_{i,j}:=\psi_i^2a_{i,j}^2$, $\forall j\in[n]$.  According to \cite[Eq. (16)]{sparta}, the following holds
\begin{align} 
\mathbb{E}\Big[\sum_{j\in \mathcal{B}_b} Z_{i,j}^2\Big]
&=\mathbb{E}\Big[\sum_{j\in \mathcal{B}_b} (\bm{a}_i^\ccalT\bm{x})^4a_{i,j}^4\Big] \nonumber\\
&= 9B\|\bm{x}\|_2^4+24 \sum_{j\in \mathcal{B}_b} x_j^4+72\|\bm{x}_b\|^2\|\bm{x}\|_2^2\label{eq:blocksupp}.
\end{align}
If $b\in \mathcal{S}_B$, then $\bm{x}_b\ne \bm{0}$, yielding   
$\mathbb{E}\big[\sum_{j\in \mathcal{B}_b} Z_{i,j}^2\big]>9B\|\bm{x}\|_2^4+72\|\bm{x}_b\|^2\|\bm{x}\|_2^2$ in~\eqref{eq:blocksupp}. On the contrary, if $b\notin \mathcal{S}_B$, one has $\bm{x}_b= \bm{0}$, yielding $\mathbb{E}\big[\sum_{j\in \mathcal{B}_b} Z_{i,j}^2\big]= 9B\|\bm{x}\|_2^4$. It is evident that there is a separation of at least $72\|\bm{x}_b\|^2\|\bm{x}\|_2^2$ in the expected values of $\sum_{j\in \mathcal{B}_b} Z_{i,j}^2$ for $b\in \mathcal{S}_B$ and $b\notin \mathcal{S}_B$. As long as the gap $72\|\bm{x}_b\|^2\|\bm{x}\|_2^2$ is large enough, the (block)-support set $\mathcal{S}_B$ can be recovered exactly in this way. 

 
To estimate the (block)-support $\mathcal{S}_B$ in practice, compute first the so-called block marginals 
$$\zeta_{b}:=  {\sum_{j\in \mathcal{B}_b}\Big(\frac{1}{m}\sum_{i=1}^m\psi_i^2|a_{i,j}|^2\Big)^2}, \quad \forall b \in \mathcal{N}_b$$
which serves as an empirical estimate of $\mathbb{E}\big[\sum_{j\in \mathcal{B}_b} Z_{i,j}^2\big]$. As explained earlier, the larger $\zeta_{b}$ is, the more likely is for the block to be nonzero, namely $\|\bm{x}_b\|_2>0$~\cite{2017hedge}. 
 Upon collecting $\{\zeta_{b}\}_{b=1}^{N_B}$, one can pick the indices associated with the $k$-largest values in $\{\zeta_b\}_{b=1}^{N_B}$, which form the estimated block-support set $\hat{\mathcal{S}}_B$. Subsequently, an estimate of the support of $\bm{x}$ denoted as $\hat{\mathcal{S}}$ can be determined as 
 $\hat{\mathcal{S}}: = \{i\in \mathcal{B}_b \,|\, \forall b \in  \hat{\mathcal{S}}_B\}.$

 The support estimation procedure is summarized in Steps 2-4 of Algorithm~\ref{alg:SPARTA}.
Appealing to \cite[Theorem 5.1]{2017hedge} (also included as Lemma~\ref{le:supp} for completeness in the Appendix), Steps 2-4 recover the  support of $\bm{x}$ exactly with probability  at least $1-\frac{6}{m}$ provided that  $m\ge C_0' k^2B\log(mn)$ for some positive constant $C_0'$ and the minimum block 
$${x}_{\min}^B~:=~\min_{b\in \mathcal{S}_B} \|\bm{x}_b\|_2^2$$ is on the order of $(1/k) \|\bm x\|_2^2$, namely, $x_{\min}^B = (C_0''/k) \|\bm x\|_2^2$ for some number $C_0''>0$.

If the support has been exactly recovered, that is, $\hat{\mathcal{S}}= \mathcal{S}$, one can rewrite $\psi_i=|\bm{a}_i^\ccalT\bm{x}|=|\bm{a}_{i,\hat{\mathcal{S}}}^\ccalT\bm{x}_{\hat{\mathcal{S}}}|$ for all $i\in [m]$, where $\bm{a}_{i,\hat{\mathcal{S}}}\in\mathbb{R}^{kB}$ contains entries of $\bm{a}_i$ whose indices belong to $ \hat{\mathcal{S}}$; and likewise for $\bm{x}_{\hat{\mathcal{S}}}\in\mathbb{R}^{k}$.  Then, the proposed initialization in \eqref{eq:maxeig} can be applied to the dimensionality-reduced data $\{(\bm{a}_{i, \hat{\mathcal{S}}}, \psi_i) \}_{i=1}^m$ to obtain
\begin{align*}
&\hat{\bm{d}}_{\hat{\mathcal{S}}}:=\! \max_{\bm{z}\in\mathbb{R}^{kB}} \bm{z}^\ccalT\! \left(\frac{\lambda^-}{|{\mathcal{I}}^{-}|}\sum_{i\in{\mathcal{I}}^-}\!\bm{a}_{i,\hat{\mathcal{S}}}\bm{a}_{i, \hat{\mathcal{S}}}^\ccalT
+\frac{\lambda^+}{|{\mathcal{I}}^{+}|}\sum_{i\in \mathcal{I}^+}\!\bm{a}_{i, \hat{\mathcal{S}}}\bm{a}_{i,\hat{\mathcal{S}}}^\ccalT\right)\!\bm{z}.
\end{align*}
Subsequently, an estimate of the $n$-dimensional vector $\bm{d}$ can be constructed by zero-padding entries of $\hat{\bm{d}}_{\hat{\mathcal{S}}}$ whose indices do not belong to $\hat{\mathcal{S}}$. 

\begin{algorithm}[t]
  \caption{Compressive Reweighted Amplitude Flow (CRAF)}
  \label{alg:SPARTA}
  \begin{algorithmic}[1]
\STATE {\bfseries Input:}
Data $\{(\bm{a}_i;\psi_i)\}_{i=1}^m$, block length $B$, and block sparsity level $k$;
initialization parameters $\lambda^- = -3$ and $\lambda^+ = 1$; 
 step size $\mu=1$; and weighting parameters $\{\beta_i=0.6\}_{i=1}^m$, $\tau_w=0.1$ .
\STATE{\bfseries For} $b = 1$ {\bfseries to} $N_B$,  compute $$\zeta_{b}:= \sum_{j\in \mathcal{B}_b}\Big(\frac{1}{m}\sum_{i=1}^m\psi_i^2|a_{i,j}|^2\Big)^2.$$
\STATE{\bfseries Set} $\hat{\mathcal{S}}_B$ to include indices 
corresponding to the $k$-largest instances in $\{\zeta_b\}_{b=1}^{N_B}$.
\STATE{\bfseries Set} $\hat{\mathcal{S}}$ to comprise indices of $\mathcal{B}_b$ for $b \in \hat{\mathcal{S}}_B$.
\STATE {\bfseries Compute}\label{step:3}
 the principal eigenvector $\hat{\bm{d}}_{\hat{\mathcal{S}}}\in\mathbb{R}^{kB}$ of $$\frac{\lambda^-}{|{\mathcal{I}}^{-}|}\sum_{i\in{\mathcal{I}}^-}\bm{a}_{i,\hat{\mathcal{S}}}\bm{a}_{i, \hat{\mathcal{S}}}^\ccalT
 +\frac{\lambda^+}{|{\mathcal{I}}^{+}|}\sum_{i\in \mathcal{I}^+}\bm{a}_{i, \hat{\mathcal{S}}}\bm{a}_{i,\hat{\mathcal{S}}}^\ccalT$$
 where ${\mathcal{I}}^{-}:= \{i\in [m]: \psi_i^2 \le \hat{r}^2/2\}$  and  ${\mathcal{I}}^{+}:=\{i \in [m]: \psi_i^2 \ge \hat{r}^2/2\}$ with  $\hat{r}:=\sqrt{\sum_{i=1}^m\psi_i^2/m}$.
 \vspace*{.1cm}
\STATE {\bfseries Initialize}
\label{step:4} $\bm{z}^0$ as $\hat{r}\tilde{\bm{d}}$, where $\tilde{\bm{d}}\in\mathbb{R}^n$ is given by augmenting  $\hat{\bm{d}}_{\hat{\mathcal{S}}}$ in Step 5 with $\tilde{{d}}_i = 0$ for $i \notin \hat{\mathcal{S}}$.
  \STATE {\bfseries Loop: For}\label{step:5} 
  {$t=0$ {\bfseries to} $T-1$}
 \begin{equation*}\vspace{-.em}
	   	\bm{z}^{t+1}=\mathcal{H}_k^B\left(\bm{z}^t-\frac{\mu}{m}\sum_{i\in[m]}w_i^t\left(\bm{a}_i^\ccalT\bm{z}^t-\psi_i\frac{\bm{a}_i^\ccalT\bm{z}^t}{|\bm{a}_i^\ccalT\bm{z}^t|}\right)\bm{a}_i\right)
\end{equation*}
   where $w_i^t:= \max\left \{\tau_w,~\frac{|\bm{a}_i^\ccalT\bm{z}^t|}{|\bm{a}_i^\ccalT\bm{z}^t|+ \psi_i \beta_i}\right \}$. 
     \STATE {\bfseries Output:}
$\bm{z}^{T}$.
  \end{algorithmic}
\vspace{-.em}
\end{algorithm} 


\subsection{Refinement via Reweighted Gradient Iterations}
\label{sub:refine}
Upon obtaining an accurate initial point, successive refinements based on reweighted gradient iterations are effected. To account for the block-sparsity structure of the wanted signal vector $\bm{x}$, the model-based iterative hard thresholding (M-IHT)~\cite{baraniuk2010model} is invoked. To start, recall that the generalized gradient of the objective function in~\eqref{eq:cost} is \cite{taf}
\begin{equation}
\label{eq:tgg}
\nabla\ell(\bm{z}):=\frac{1}{m}\sum_{i\in[m]}\Big(\bm{a}_i^\ccalT\bm{z}-\psi_i\frac{\bm{a}_i^\ccalT\bm{z}}{|\bm{a}_i^\ccalT\bm{z}|}\Big)\bm{a}_i
\end{equation}
in which the convention ${\bm{a}_i^\ccalT\bm{z}}/{|\bm{a}_i^\ccalT\bm{z}|}:=0$ for $|\bm{a}_i^\ccalT\bm{z}| = 0$ is adopted.

With $t\ge 0$ denoting the iteration count and $\bm{z}^0$ being the initial point, the M-IHT algorithm proceeds with the following $k$-block-sparse hard thresholding, namely
 \begin{equation}\label{eq:iteration}
 \bm{z}^{t+1}=\mathcal{H}_k^B\!\left(\bm{z}^t-\frac{\mu}{m}\nabla\ell(\bm{z}^t)\right)
 \end{equation}
 where $\mu>0$ is the preselected step size, and the block-sparse hard thresholding operator  $\mathcal{H}_k^B(\bar{\bm{u}}):\mathbb{R}^n\to \mathbb{R}^n$ converts an $n$-dimensional vector $\bar{\bm{u}}:=[\bar{\bm{u}}_1^\ccalT \,\ldots \,\bar{\bm{u}}_{N_B}^\ccalT]^\ccalT$ into a $k$-block-sparse one $\bm{u}:=[\bm{u}_1^\ccalT \,\ldots \,\bm{u}_{N_B}^\ccalT]^\ccalT$ such that 
\[ \bm{u}_b =
\begin{cases}
\bar{\bm{u}}_b,      &  \text{if } b \in \mathcal{U}_B\\
\bm{0},  &  \text{if } b \notin \mathcal{U}_B
\end{cases}
\]
where $\mathcal{U}_B$ comprises indices corresponding to the $k$-largest entities in $\{\|\bar{\bm{u}}_b\|_2\}_{b=1}^{N_B}$. 

Unfortunately, the negative gradient $-\nabla\ell(\bm{z})$ may not drag the iterate sequence $\{\bm{z}^t \}$ to the global optimum $\bm{x}$ because the estimated sign $\bm{a}_i^\ccalT\bm{z}/|\bm{a}_i^\ccalT\bm{z}| $ in $\nabla\ell(\bm{z})$ may not coincide with the true one $ \bm{a}_i^\ccalT\bm{x}/|\bm{a}_i^\ccalT\bm{x}|$~\cite{taf}. As a consequence, the update in~\eqref{eq:iteration} may not always reduce the distance of the iterate to the global optimum. To alleviate the negative influence of the erroneously estimated signs, SPARTA implements the following truncated gradient $\nabla\ell_{\rm tr}(\bm{z}^t)$ \cite{sparta}
\begin{equation}
\label{eq:trun}
\nabla\ell_{\rm tr}(\bm{z}^t): =\frac{1}{m}\sum_{i\in\mathcal{I}^{t}}\Big(\bm{a}_i^\ccalT\bm{z}^t-\psi_i\frac{\bm{a}_i^\ccalT\bm{z}^t}{|\bm{a}_i^\ccalT\bm{z}^t|}\Big)\bm{a}_i
\end{equation}
where $$ \mathcal{I}^{t}:=\Big\{1\le i\le m\Big|\frac{|\bm{a}_i^\ccalT\bm{z}^t|}{|\bm{a}_i^\ccalT\bm{x}|}\ge \tau_{\rm g}\Big\}	$$ 
for some preselected truncation parameter. It is clear that $\nabla\ell_{\rm tr}(\bm{z})$ is based on data samples whose associated  $|\bm{a_i}^\ccalT\bm{z}|$ is of relatively large sizes. The reason for this gradient truncation is that gradients (summands in \eqref{eq:trun}) of large $|\bm{a_i}^\ccalT\bm{z}|/|\bm{a_i}^\ccalT\bm{x}|$ provably point toward the global optimum $\bm{x}$ with high probability~\cite{taf}. 
However, as pointed out in~\cite{raf}, the truncation operation may reject meaningful samples, which hampers the efficacy of  $\nabla\ell_{\rm tr}$ especially when the sample size is limited.

An alternative to the truncation trimming procedure is to introduce different weights for different gradients \cite{raf}, which helps fusing useful information from all gradient directions. Specifically, the ensuing reweighted gradient used in \cite{raf} proves successful in phase retrieval  of general signal vectors 
\begin{equation}
\nabla\ell_{\rm rw}(\bm{z}^t): =\frac{1}{m}\sum_{i\in[m]}w_i^t\Big(\bm{a}_i^\ccalT\bm{z}^t-\psi_i\frac{\bm{a}_i^\ccalT\bm{z}^t}{|\bm{a}_i^\ccalT\bm{z}^t|}\Big)\bm{a}_i 
\end{equation}
where the weights are given by
\begin{equation}
w_i^t:= \max\left \{\tau_w,~\frac{|\bm{a}_i^\ccalT\bm{z}^t|}{|\bm{a}_i^\ccalT\bm{z}^t|+ \psi_i \beta_i}\right \},\quad \forall i\in [m]
\end{equation}
 for certain preselected parameters $\tau_w>0$ and $\beta_i >0$ for all $i\in [m]$. Evidently, it holds that $\tau_w \leq w_i^t \leq 1$ for all $i\in[m]$, and the larger the ratio $|\bm{a_i}^\ccalT\bm{z}|/|\bm{a_i}^\ccalT\bm{x}|$, the larger the weight $w_i^t $. In this sense, $w_i^t$ reflects the confidence in the $i$-th negative gradient pointing toward the global optimum $\bm{x}$. 
 
 In the context of phase retrieval  of block-sparse vectors, it is thus reasonable to implement the M-IHT based iteration  using reweighted gradients, namely
 \begin{equation}\label{eq:biteration}
 \bm{z}^{t+1}:=\mathcal{H}_k^B\!\left(\bm{z}^t-\mu \nabla\ell_{\rm rw}(\bm{z}^t)\right).
 \end{equation}
The proposed block-sparse phase retrieval  solver is summarized in Algorithm~\ref{alg:SPARTA}, whose exact recovery is established in the following theorem. 
\begin{theorem}
\label{le:thresh}
 Let $\bm{x}\in \mathbb{R}^n$ be any $k$-block-sparse ($kB\ll n$) vector with $x_{\rm min}^B: = (C_0''/k) \|\bm x\|_2^2$.  Consider noiseless measurements $\{\psi_i = |\bm{a}_i^\ccalT\bm{x}|\}_{i=1}^m$ from the real Gaussian model.
 If $m\ge C_1 k^2B\log (mn)$, there exists a constant learning rate $\mu>0$, such that the successive estimates $\bm{z}^t$ in  Algorithm~\ref{alg:SPARTA} obey 
 \begin{equation}
 \|\bm{z}^t-\bm{x}\|_2\le \delta_0\rho^t\|\bm{x}\|_2
,\quad t=0,\,1,\,\ldots
 \end{equation} 
 with probability at least $1-c_2\exp(-c_1 m)-6/m$. Here, $0<\delta_0<1$, $0<\rho<1$, $\mu,\,\,c_1>0,\,c_2>0$, $C_0''$, and $C_1$ are certain numerical constants. 
\end{theorem}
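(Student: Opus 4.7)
The plan is to split the proof into three stages that mirror Algorithm~\ref{alg:SPARTA}: (i) exact block-support recovery; (ii) a good sparse initialization on the recovered support; and (iii) linear contraction of the reweighted M-IHT iterates. For (i), the support-estimation lemma of~\cite{2017hedge} (reproduced as Lemma~\ref{le:supp}) yields $\hat{\mathcal{S}}_B=\mathcal{S}_B$ with probability at least $1-6/m$ once $m\gtrsim k^2B\log(mn)$ and $x_{\min}^B\ge(C_0''/k)\|\bm{x}\|_2^2$. Conditioned on this event, the sensing model restricted to $\hat{\mathcal{S}}$ becomes a standard $(kB)$-dimensional Gaussian phase retrieval instance, so Theorem~\ref{th:int} applied to $\{(\bm{a}_{i,\hat{\mathcal{S}}},\psi_i)\}$ delivers ${\rm dist}(\bm{z}^0,\bm{x})\le\delta_0\|\bm{x}\|_2$ with probability at least $1-10\exp(-c_0m)$ whenever $m\gtrsim kB$, which is absorbed by the assumed sample complexity.

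For the refinement stage I would establish a \emph{local regularity condition} (LRC) for the reweighted gradient, restricted to $2k$-block-sparse search directions. Let $\mathcal{E}(\delta):=\{\bm{z}\in\mathcal{M}_B^k:{\rm dist}(\bm{z},\bm{x})\le\delta\|\bm{x}\|_2\}$ and $\bm{h}:=\bm{z}-\bm{x}$, which remains $2k$-block-sparse throughout. Adapting the argument in~\cite{raf} to the block-sparse setting, I would show that uniformly over $\bm{z}\in\mathcal{E}(\delta)$,
\begin{equation*}
\langle\nabla\ell_{\rm rw}(\bm{z}),\,\bm{h}\rangle\;\ge\;c_l\|\bm{h}\|_2^2+c_u^{-1}\|\nabla\ell_{\rm rw}(\bm{z})\|_2^2
\end{equation*}
holds with high probability. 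The positive-correlation piece is inherited from~\cite{raf} and is promoted to hold uniformly over all $2k$-block-sparse supports by a covering argument together with a restricted-isometry-type concentration bound for the Gaussian quadratic form; the union bound over the $\binom{N_B}{2k}$ possible supports is what yields the $k^2B\log(mn)$ sample-complexity factor. The upper-bound piece uses the weight bounds $\tau_w\le w_i^t\le 1$ together with $\bigl||\bm{a}_i^\ccalT\bm{z}|-|\bm{a}_i^\ccalT\bm{x}|\bigr|\le|\bm{a}_i^\ccalT\bm{h}|$.

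With the LRC in hand and the standard nonexpansiveness estimate $\|\mathcal{H}_k^B(\bm{u})-\bm{v}\|_2\le 2\|\bm{u}-\bm{v}\|_2$ for any $k$-block-sparse $\bm{v}$~\cite{baraniuk2010model}, the usual one-step descent analysis gives
\begin{equation*}
\|\bm{z}^{t+1}-\bm{x}\|_2^2\;\le\;4\bigl(\|\bm{h}^t\|_2^2-2\mu\langle\nabla\ell_{\rm rw}(\bm{z}^t),\bm{h}^t\rangle+\mu^2\|\nabla\ell_{\rm rw}(\bm{z}^t)\|_2^2\bigr).
\end{equation*}
Substituting the LRC and selecting $\mu\le 2/c_u$ reduces the right-hand side to $4(1-2\mu c_l)\|\bm{h}^t\|_2^2$; provided $\delta_0$ is small enough and $c_l$ is large enough so that $\rho^2:=4(1-2\mu c_l)<1$, induction on $t$ keeps every iterate inside $\mathcal{E}(\delta_0)$ and yields $\|\bm{z}^t-\bm{x}\|_2\le\delta_0\rho^t\|\bm{x}\|_2$. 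The failure probabilities for initialization, support recovery, and the uniform LRC are aggregated by a final union bound to obtain the stated $1-c_2\exp(-c_1m)-6/m$ guarantee.

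The main obstacle I anticipate is proving the LRC \emph{uniformly} on the model-based sparse set. The weights $w_i^t$ depend nonlinearly on $\bm{z}^t$, so the summands in $\nabla\ell_{\rm rw}$ are coupled with the iterate; the inequality must also hold uniformly over every possible union of the supports of $\bm{z}^t$ and $\bm{x}$, which calls for a careful $\epsilon$-net argument restricted to $2k$-block-sparse unit vectors, combined with concentration of the (heavy-tailed) quadratic forms that appear in $\nabla\ell_{\rm rw}$. A secondary subtlety is absorbing the factor-of-two blow-up from $\mathcal{H}_k^B$ into the contraction margin supplied by the LRC, which quantitatively constrains the admissible $\delta_0$ and $\mu$.
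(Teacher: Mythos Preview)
Your high-level organization (support recovery $\to$ initialization $\to$ iteration contraction) matches the paper, and your handling of stages (i) and (ii) via Lemma~\ref{le:supp} and Theorem~\ref{th:int} is exactly what the paper does. The difficulty is in stage (iii): the LRC route you propose does not go through in the compressive regime $m\ll n$.

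The specific gap is this. The one-step bound you invoke, $\|\mathcal{H}_k^B(\bm{u})-\bm{v}\|_2\le 2\|\bm{u}-\bm{v}\|_2$, feeds the \emph{full} $n$-dimensional gradient into the recursion, so the term $\mu^2\|\nabla\ell_{\rm rw}(\bm{z}^t)\|_2^2$ involves the dense vector $\frac{1}{m}\sum_i w_i^t r_i^t\,\bm{a}_i$. With $m\ll n$ and Gaussian $\bm{a}_i$, one has $\|\nabla\ell_{\rm rw}(\bm{z})\|_2^2$ of order $(n/m)\|\bm{h}\|_2^2$, not $O(\|\bm{h}\|_2^2)$. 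Consequently the LRC in the form $\langle\nabla\ell_{\rm rw},\bm{h}\rangle\ge c_l\|\bm{h}\|_2^2+c_u^{-1}\|\nabla\ell_{\rm rw}\|_2^2$ can only hold with $c_u\gtrsim n/m$; the step-size restriction $\mu\le 2/c_u$ then forces $\mu c_l=O(m/n)$ and $4(1-2\mu c_l)>1$, so no contraction is obtained. The ``factor-of-two blow-up'' you flag is not merely a quantitative nuisance here---combined with the dense gradient it is fatal.

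The paper sidesteps this by never invoking an LRC and never letting the full gradient appear. It uses the sharper IHT estimate $\|\bm{x}-\bm{z}^{t+1}\|_2\le 2\|(\bm{x}-\bm{v}^{t+1})_{\Omega^{t+1}}\|_2$, where $\Omega^{t+1}=\mathcal{S}\cup\hat{\mathcal{S}}^{t+1}$ has cardinality at most $2kB$, and then expands $(\bm{x}-\bm{v}^{t+1})_{\Omega^{t+1}}$ into three terms: a weighted-RIP term on $\Omega^{t+1}$, a cross-support term on $\Omega^t\setminus\Omega^{t+1}$, and a sign-mismatch term. Each is bounded directly using the Gaussian RIP constants $\delta_{2s},\delta_{3s}$, the weight bounds $\tau_w\le w_i^t\le 1$, and the sign-error lemma of~\cite{pwf}; the resulting contraction factor is then shown to be $<1$ for suitable $\mu$. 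If you want to salvage an LRC-style argument, you would need to formulate it for the \emph{support-restricted} gradient $(\nabla\ell_{\rm rw})_{\Omega^{t+1}}$ and pair it with the restricted thresholding bound---at which point you are essentially reproducing the paper's decomposition.
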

The proof of Theorem \ref{le:thresh} is provided in Section~\ref{sec:thm2}. Regarding its implication, a couple of observations come in order. To start, as soon as $m\ge C_1 k^2B\log (mn)$,
CRAF recovers exactly $k$-block-sparse vectors $\bm{x}$ of non-negligible blocks. This sample complexity is consistent with the Block CoPRAM method in \cite{2017hedge}. Furthermore. CRAF converges exponentially fast. Expressed differently, it takes CRAF at most $T:=  \mathcal{O}(\log(1/\epsilon))$ iterations to reach a solution of $\epsilon$-relative accuracy.

 \section{Numerical Tests} 
\label{sec:numerical}
This section demonstrates the efficacy of the proposed initialization and the CRAF algorithm relative to the state-of-the-art approaches for sparse phase retrieval, including SPARTA~\cite{sparta} and CoPRAM~\cite{2017hedge}. 
In all experiments, the support $\mathcal{S}$ of the true signal vectors $\bm{x}\in \mathbb{R}^{3,000}$ was randomly chosen. The nonzero entries were generated using $\bm{x}_{\mathcal{S}}\sim \mathcal N{(\bm{0}, \bm{I})}$. The obtained  $\bm{x}$ was subsequently normalized such that $\|\bm{x}\|_2 =1$. The sampling vectors were generated using $\bm{a}_i\sim \mathcal N{(\bm{0}, \bm{I})}$, $1\le i\le m$.  For SPARTA, its suggested parameters were used.
 The parameters of CRAF were set as $\lambda^- = -3$, $\lambda^+ = 1$, $\{\beta_i = 0.6\}_{i=1}^m$, $\tau_w = 0.1$, and $\mu = 1$. For all simulated algorithms, the maximum iterations were fixed to  $T=1,000$, and all reported results are averaged over $100$ Monte Carlo simulations.

The first experiment evaluates the performance of our initialization relative to that in SPARTA~\cite{sparta} and CoPRAM~\cite{2017hedge} for block length $B=1$. Figure~\ref{fig:int} depicts the average relative error of the three initialization schemes
 with the sparsity level $k$ varying from $25$ to $35$, and $m/k$ fixed to  $30$. Clearly, the new initialization  outperforms the other two with large margins.

\begin{figure}[ht]
	\centering
	\includegraphics[scale=0.58]{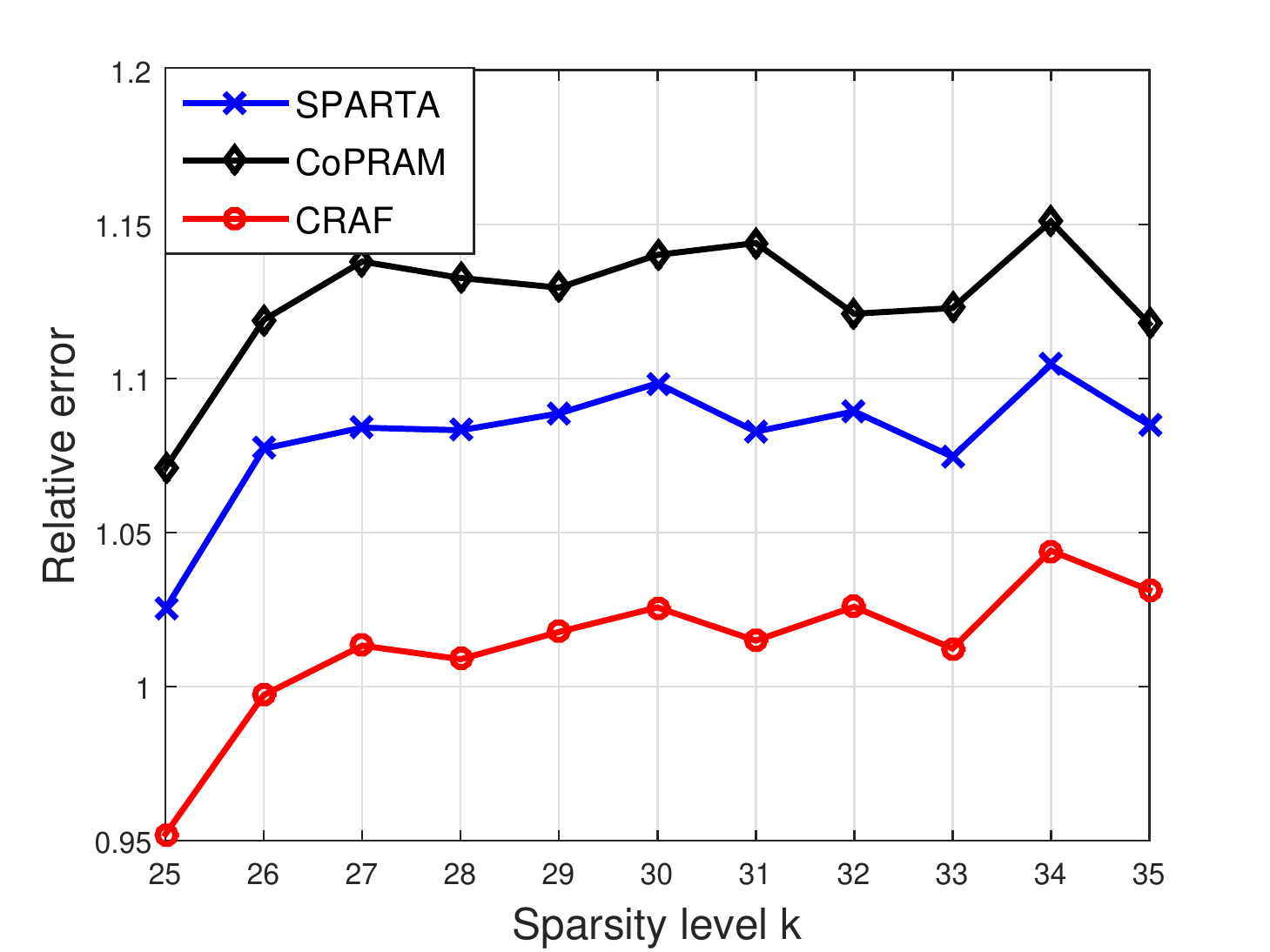}
	\caption{Average relative error for $B=1$.}
	\label{fig:int}
\end{figure}

The second experiment  examines the empirical success rates of CRAF, SPARTA, and CoPRAM  for solving the ordinary compressive phase retrieval with $B =1$.  Each of the $100$ Monte Carlo trials is declared a success if the relative error 
${\rm dist}(\bm{z}^T,\,\bm{x})/\|\bm{x}\|_2$ is less than $10^{-5}$. The empirical success rates of CRAF, SPARTA, and CoPRAM are presented in Fig.~\ref{fig:2} with $m$ increasing from $400$ to $1,800$. Notably, the curves showcase improved exact recovery performance of CRAF relative to its competing alternatives. Since in certain applications, the sparsity level $k$ may not be accurately known,  it is desirable to have the compressive phase retrieval  algorithms remain operational for unknown or inexact $k$ values. Let $\hat{k}$ be an estimate of the sparsity level $k$. 
The recovery performance of CRAF is tested with $\hat{k}$ set as the upper limit of the theoretically affordable sparsity level, namely $\sqrt{3,000}\approx 55$. 
From Fig.~\ref{fig:3}, it is clear that CRAF offers the best numerical performance for unknown $k$. A careful comparison between Figs.~\ref{fig:2} and \ref{fig:3} demonstrates that CRAF is more robust to unknown $k$ values than CoPRAM. 

\begin{figure}[ht]
	\centering
	\includegraphics[scale=0.58]{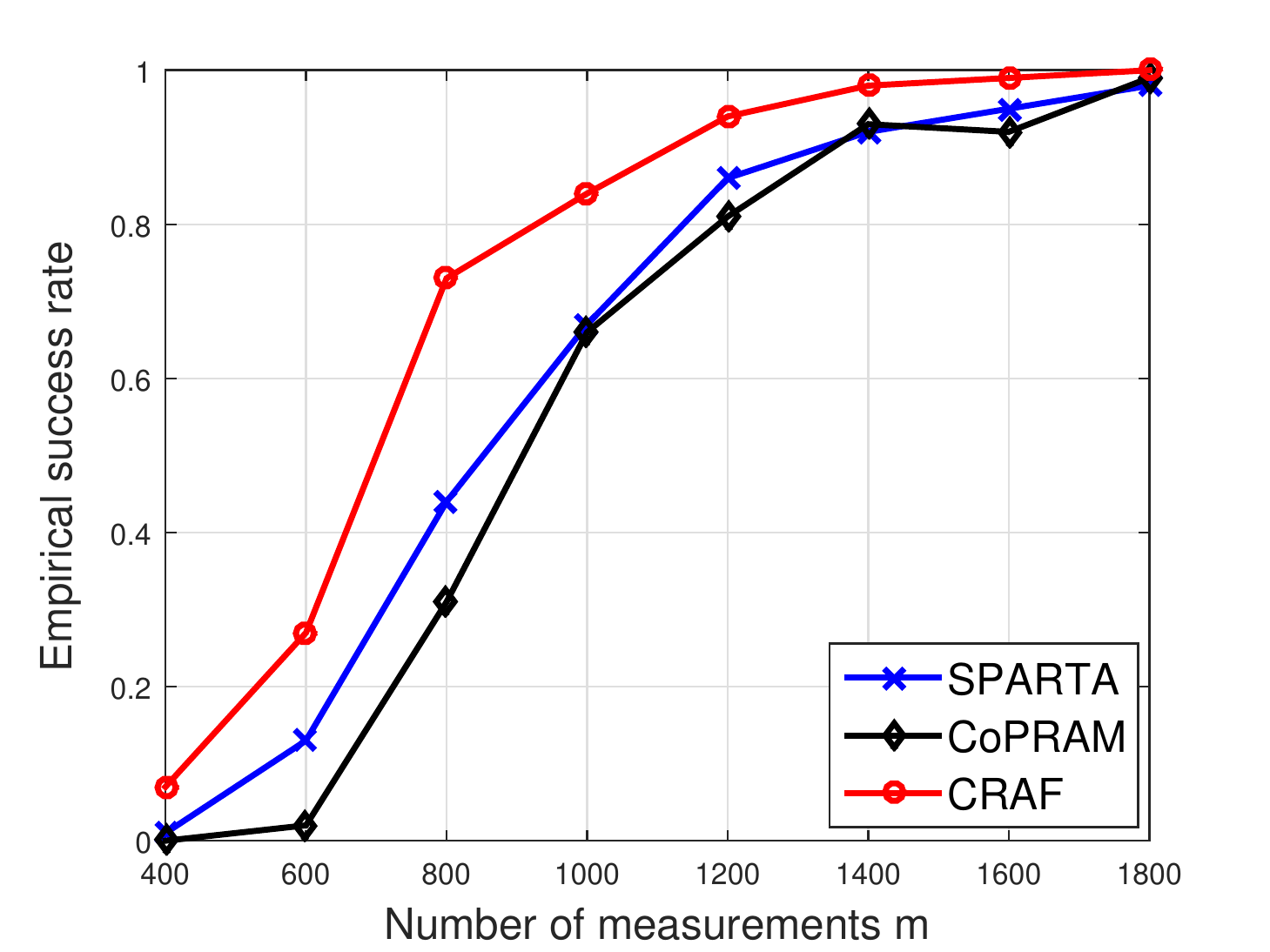}
	\caption{Empirical success rate versus $m$ for $B=1$, $k = 30$.}
	\label{fig:2}
\end{figure}

\begin{figure}[ht]
	\centering
	\includegraphics[scale=0.58]{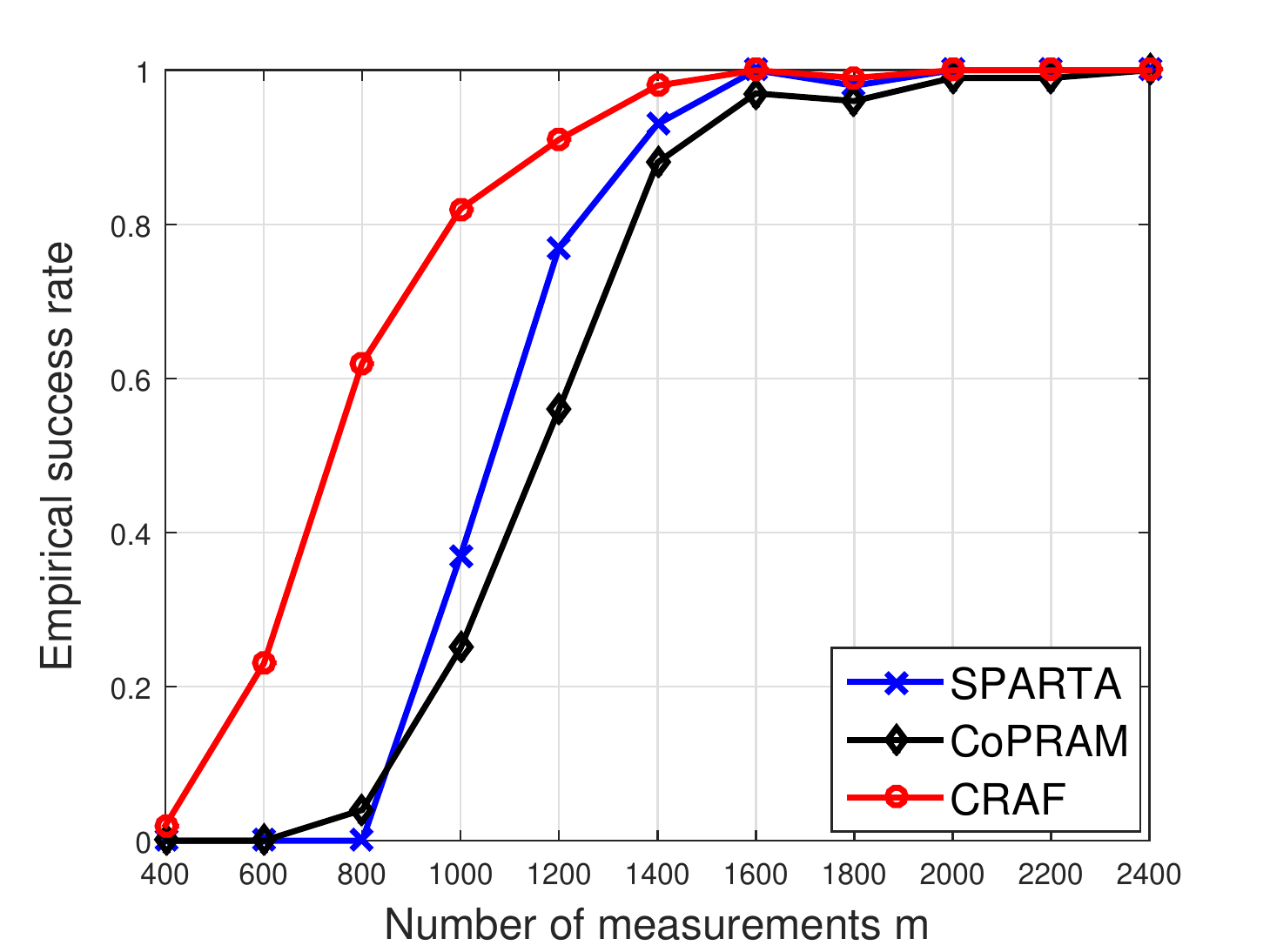}
	\caption{Empirical success rate versus $m$ for $B=1$, $k = 30$, and $\hat k = 55$.}
	\label{fig:3}
\end{figure}

The next experiment tests the performance of CRAF relative to that of Block CoPRAM and SPARTA for $20$-block-sparse phase retrieval with block length $B=2$. The empirical success rates for the three schemes from 100 independent trials with $k$ known and unknown are reported in Figs.~\ref{fig:4} and \ref{fig:5}, respectively. In both cases, CRAF yields the best recovery performance.

\begin{figure}[ht]
	\centering
	\includegraphics[scale=0.58]{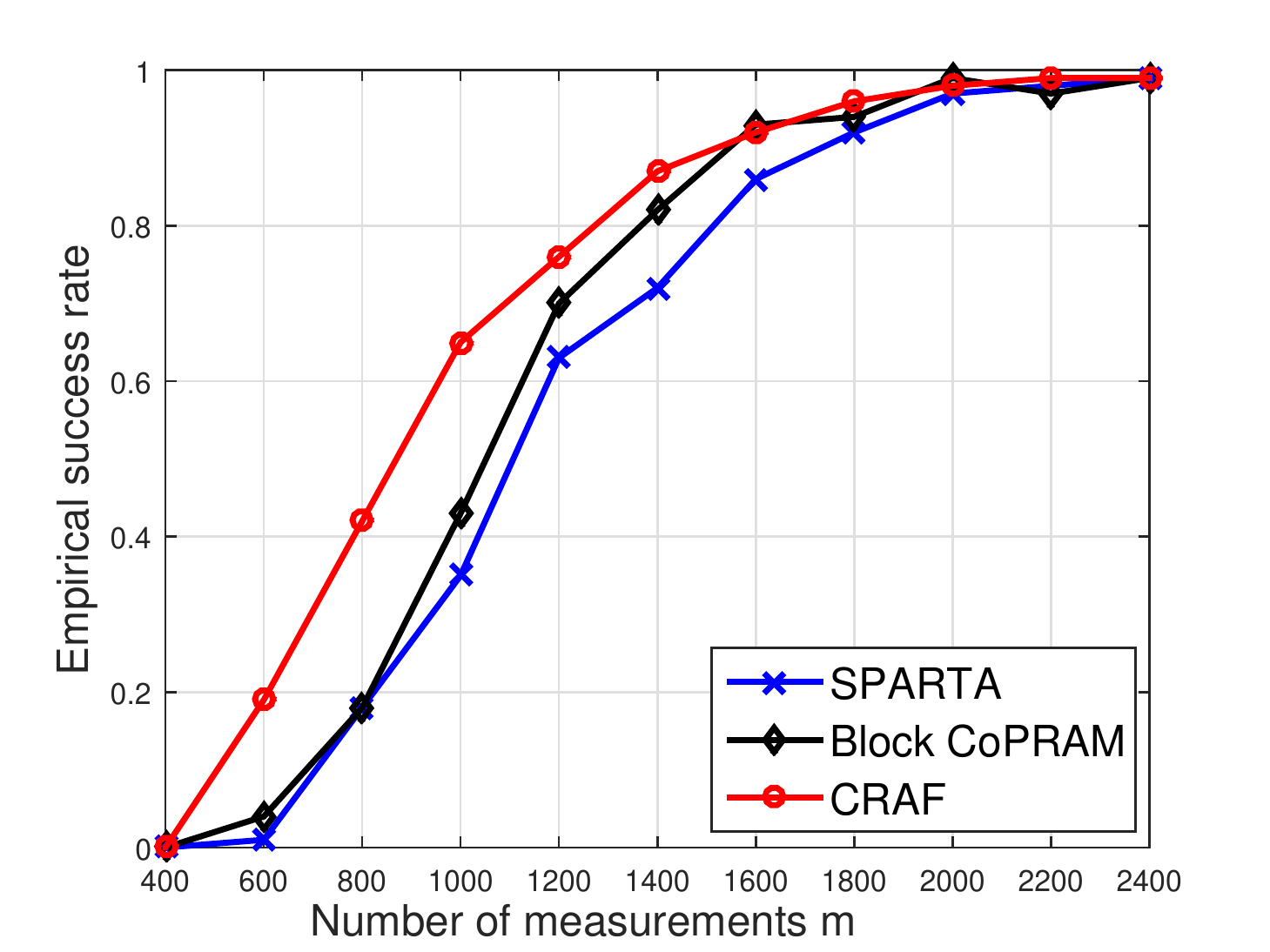}
	\caption{Empirical success rate versus $m$ for $B=2$, $k = 20$.}
	\label{fig:4}
\end{figure}

\begin{figure}[ht]
	\centering
	\includegraphics[scale=0.58]{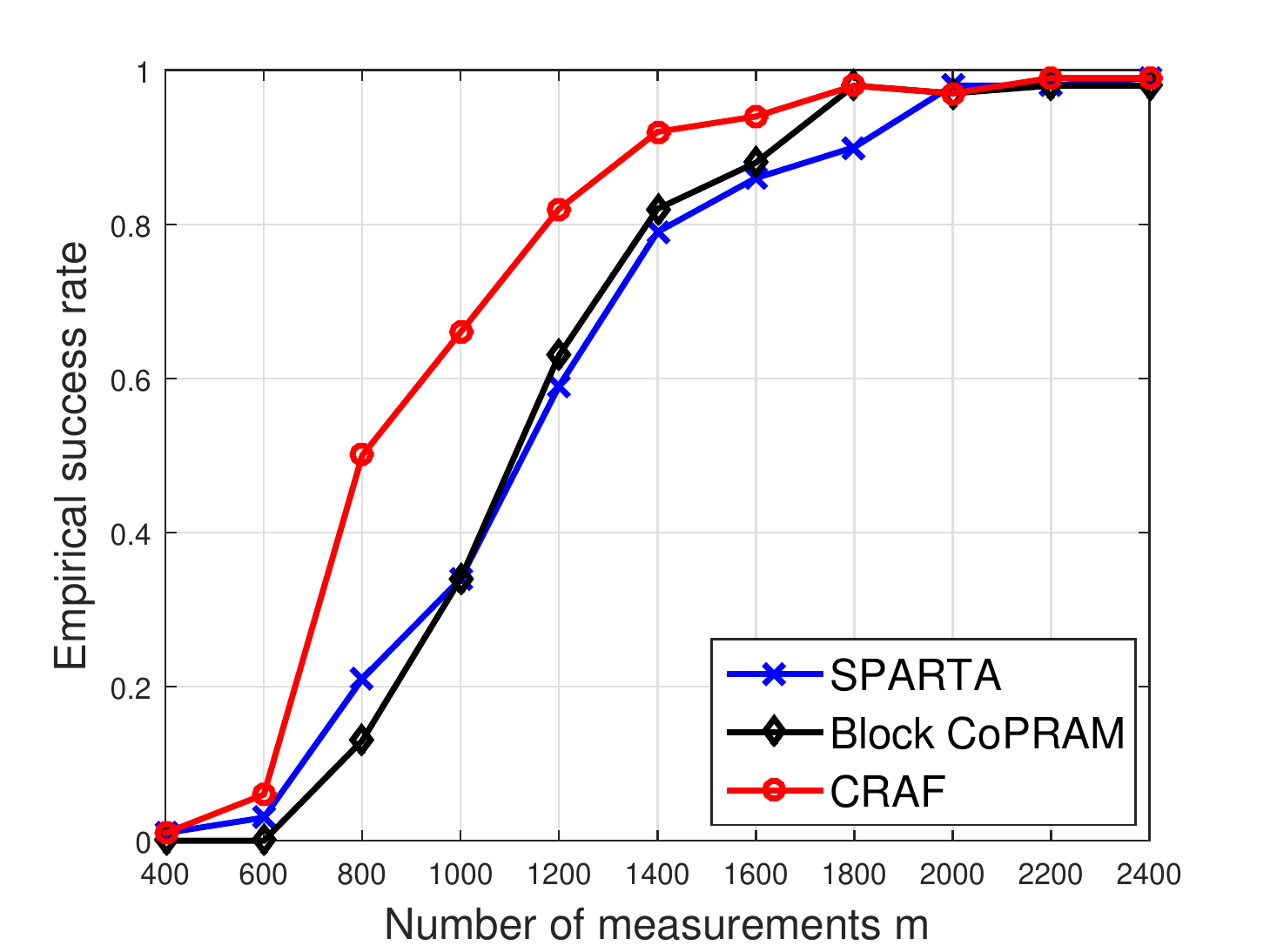}
	\caption{Empirical success rate versus $m$ for $B=2$, $k=20$, and $\hat{k}=28$.}
	\label{fig:5}
\end{figure}

The last experiment validates the robustness of CRAF with respect to noisy measurements of the following form:
$$\psi_i = |\bm{a}_i^\ccalT \bm{x}|+\eta_i,\quad 1\le i\le m $$  
where $\{\eta_i\}$ are independently sampled from $\mathcal{N}(0, \sigma^2)$. In this experiment, $k =30$, $B=1$, and $m =1,600$ were simulated.
 Figure~\ref{fig:6} depicts the relative errors of the three approaches versus varying $\sigma^2$ from $0.1$ to $0.6$, from which it is clear that CRAF offers the most accurate estimates for all noise levels. In other words, CRAF achieves improved robustness relative to SPARTA and CoPRAM.

\begin{figure}[ht]
	\centering
	\includegraphics[scale=0.58]{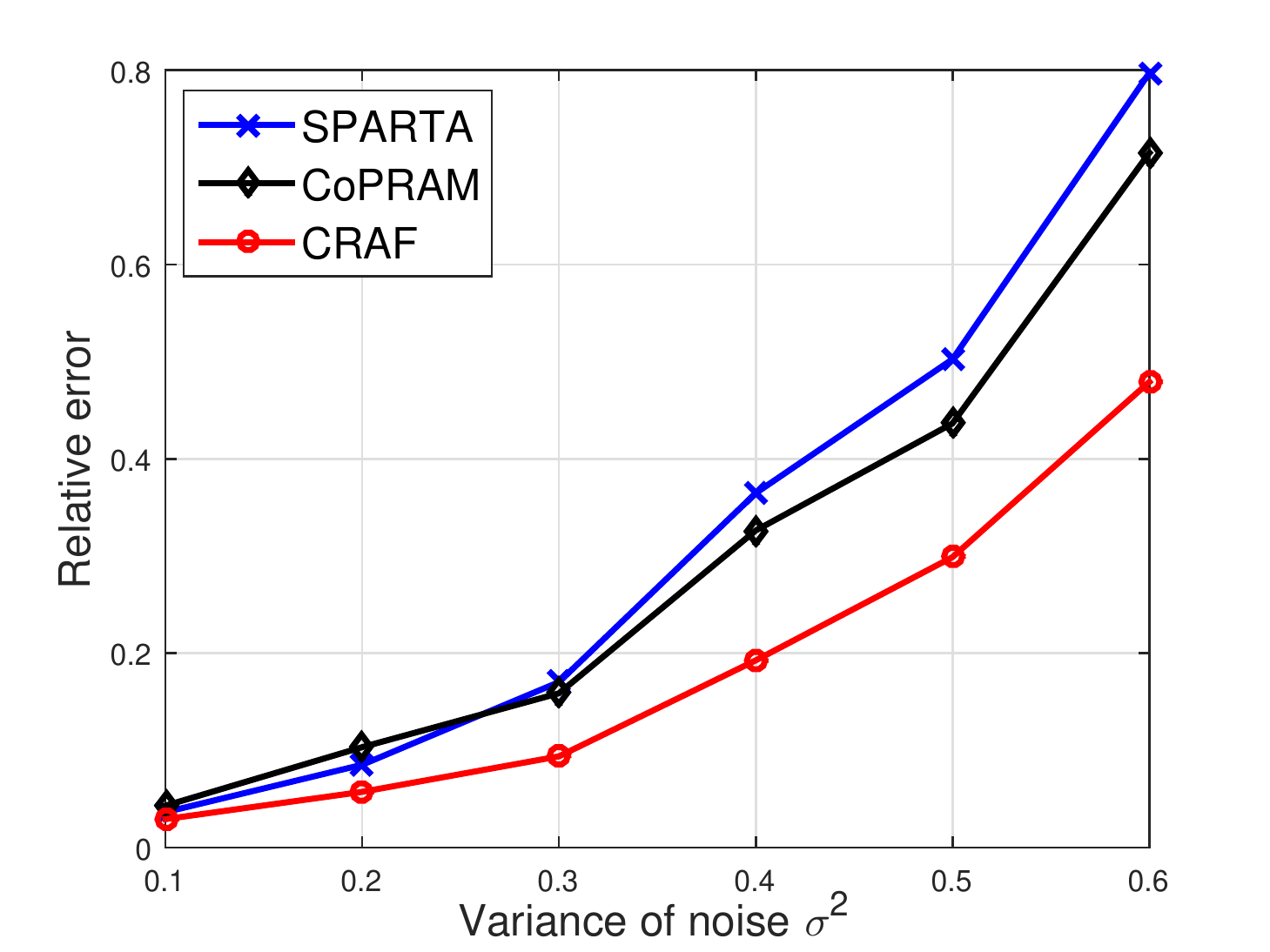}
	\caption{Average relative error of signal recovery versus the variance of noise $\sigma^2$  for $B=1$, $k=30$, and $m=1600$.}
	\label{fig:6}
\end{figure}

\section{Proofs}
\label{sec:proof}

The proofs of Theorem~\ref{th:int} and~\ref{le:thresh} are presented next.  The proof of Theorem~\ref{th:int} is mainly based on that of~\cite[Prop.~2]{duchi2017solving}, whereas the proof of Theorem~\ref{le:thresh} builds upon \cite[Thm. 1]{sparta}.
\subsection{Proof of Theorem 1}\label{sec:thm1}

	For ease of presentation, some notation is established first. To begin, 
	let $$\bm{M}^-:= \frac{1}{|{\mathcal{I}}^{-}|}\sum_{i\in{\mathcal{I}}^-}\bm{a}_{i}\bm{a}_i^\ccalT,~\text{and}~\bm{M}^+:= \frac{1}{|{\mathcal{I}}^{+}|}\sum_{i\in{\mathcal{I}}^+}\bm{a}_{i}\bm{d}_i^\ccalT$$ denote the first and second parts of the matrix used in the new initialization procedure \eqref{eq:maxeig}.
	Upon defining
	\begin{align*}
	\phi(\tau^-) &:= \mathbb{E}[\langle \bm{a}, \bm{d}\rangle^2 | \langle \bm{a}, \bm{d}\rangle^2\le \tau^-] -1\\
	\phi(\tau^+) &:=  \mathbb{E}[\langle \bm{a}, \bm{d}\rangle^2 | \langle \bm{a}, \bm{d}\rangle^2\ge \tau^+] -1 
	\end{align*}
    it can be verified that 
    $$\mathbb{E}[\bm{M}^-] = \bm{I}_n +\phi(\tau^-)\bm{d}\bm{d}^\ccalT\!,~\text{and}~\mathbb{E}[\bm{M}^+] = \bm{I}_n +\phi(\tau^+)\bm{d}\bm{d}^\ccalT.$$
	Without loss of generality, one can then write 
	\begin{align*}
	\bm{M}^-&=\bm{I}_n +\phi^-(\epsilon) \bm{d}\bm{d}^\ccalT+\bm{\Delta}^-\\  
	\bm{M}^+&=\bm{I}_n +\phi^+(\epsilon) \bm{d}\bm{d}^\ccalT+\bm{\Delta}^+
	\end{align*} 
where  $\bm{\Delta}^-$ $(\bm{\Delta}^+)$ describes the difference between $\bm{M}^-$ $(\bm{M}^+)$ and their mean $\mathbb{E}[\bm{M}^-]=\bm{I}_n +\phi^-(\epsilon) \bm{d}\bm{d}^\ccalT$ $(\mathbb{E}[\bm{M}^+]=\bm{I}_n +\phi^+(\epsilon) \bm{d}\bm{d}^\ccalT)$. Appealing to a standard eigenvector perturbation result \cite[Corollary 1]{yu2014useful} (also included as Lemma \ref{le:mp} in the Appendix), to bound ${\rm dist}(\hat{\bm{d}}, \bm{d})$, it suffices to bound   $\|\lambda^-\bm{\Delta}^-+\lambda^+\bm{\Delta}^+\|_2$.
	
	It has been established in~\cite[Proposition~2]{duchi2017solving} that for arbitrarily small $\delta^-\in (0, 1)$, and some absolute constants $c^->0$,  $C^-<\infty$ dependent on $\delta^-$, the following holds with probability at least $1-5\exp(-c^-m)$ 
	\begin{equation}
	\label{eq:delta2-}
	\|\bm{\Delta}^-\|_2 \leq \delta^-
	\end{equation}
	 whenever $m\ge C^- n$.
	 
	The task now remains
	to bound $\|\bm{\Delta}^+\|_2$. To account for the estimation error in $\hat r$, the following two index sets surrounding $\mathcal{I}^+$ are introduced \cite[Proposition 2]{duchi2017solving}  
	$$\mathcal{I}_{-\epsilon}^+ := \left\{ i \in [m]~|~\langle \bm{a}_i , \bm{d}\rangle^2 > \frac{1-\epsilon}{2} \right\}$$
	$$\mathcal{I}_{+\epsilon}^+ := \left\{ i \in [m]~|~\langle \bm{a}_i , \bm{d}\rangle^2 \ge \frac{1+\epsilon}{2} \right\}$$
	for some numerical constant $\epsilon \in (0, 1)$. 
	For convenience, one can set $\kappa :=  \exp((\epsilon -1)/4)/\sqrt{\pi(1-\epsilon)}$ which  upper bounds $P\!\left(\langle \bm{a}, \bm{d}\rangle^2 \in [\frac{1-\epsilon}{2}, \frac{1+\epsilon}{2}] \right)/\epsilon$, and $$p_0(\epsilon):= P\!\left(\langle \bm{a}, \bm{d}\rangle^2\ge \frac{1+\epsilon}{2}\right).$$ It can be readily checked that 
	\begin{align*}
	p_0(\epsilon)&= P\left(\chi_1^2\ge \frac{1+\epsilon}{2}\right) = 1- P\!\left(\chi_1^2\le \frac{1+\epsilon}{2}\right) \\
	&\ge 1-\sqrt{\frac{1+\epsilon}{2}\exp\!\left(\frac{1-\epsilon}{2}\right)}
	\end{align*}
	 leveraging the tail bound of the $\chi_1^2$ distribution.

	Subsequently, five events denoted as 
$\{\mathcal{E}_i\}_{i=1}^5$	
 occurring with high probability are introduced  in~\eqref{eq:events}.
	The constants $p\ge 1$ and $q\ge 1$ in~\eqref{eq:events}  satisfy $1/p+1/q=1$. On the event $\mathcal{E}_1$, it can be verified that $\hat{r}^2\in [(1-\epsilon)r^2, (1+\epsilon)r^2]$, hence   $\mathcal{I_{+\epsilon}^+}\subset \mathcal{I^+} \subset \mathcal{I_{-\epsilon}^+}$. When all five events $\{\mathcal{E}_i\}_{i=1}^5$  occur, $\|\bm{\Delta}^+\|_2$ can be bounded. In detail,  $\bm{\Delta}^+$ can be rewritten as in~\eqref{eq:delta+},
   		\begin{figure*}[!t]
   			\normalsize
            	\begin{align} 
            	&\mathcal{E}_1: = \left \{\frac{1}{m}\left\|\bm{A}^\ccalT\bm{A}\right \|_2  \in [1-\epsilon, 1+\epsilon] \right \},
            	\quad \qquad \mathcal{E}_2: = \left \{\left|\mathcal{I}_{-\epsilon}^+\right| \leq |\mathcal{I}_{+\epsilon}^+|+2\epsilon\kappa m\right \}\nonumber \\
            	&\mathcal{E}_3: = \left \{|\mathcal{I_{+\epsilon}^+}| \ge \frac{1}{2} m p_0({\epsilon})\right \}, \qquad \mathcal{E}_4: = \left \{\Big \|\frac{1}{m}\sum_{i \in \mathcal{I}_{-\epsilon}^+ \setminus  \mathcal{I}_{+\epsilon}^+}\bm{a}_i\bm{a}_i^\ccalT \Big\|_2 \le 4q(\kappa\epsilon)^{\frac{1}{p}}\right \}  \label{eq:events}\\
            	&\mathcal{E}_5: = 
            	 \left \{\left \|\frac{1}{|\mathcal{I_{+\epsilon}^+}|} \sum_{i \in \mathcal{I_{+\epsilon}^+}} \left[\bm{a}_i\bm{a}_i^\ccalT - \Big(\bm{I}_n +\phi^+\Big(\frac{1+\epsilon}{2}\Big)\bm{x}\bm{x}^\ccalT \Big)\right] \right\|_2  \leq \epsilon \right \} \nonumber
            	\end{align} 
            \hrulefill	
            
   			\begin{align}
   			\bm{\Delta}^+ &= \bm{M}^+- \Big(\bm{I}_n +\phi^+\Big(\frac{1+\epsilon}{2}\Big)\bm{d}\bm{d}^\ccalT \Big) \nonumber \\
   			& = \underbrace{\frac{1}{|\mathcal{I}_{+\epsilon}^+|}\sum_{i \in \mathcal{I_{+\epsilon}^+}} \left[\bm{a}_i\bm{a}_i^\ccalT - \Big(\bm{I}_n + \phi^+\Big(\frac{1+\epsilon}{2}\Big)\bm{d}\bm{d}^\ccalT \Big)\right] }_{:= \bm{\Delta}_1^+}
   			+\underbrace{\frac{1}{|\mathcal{I}_{+\epsilon}^+|} \sum_{i \in \mathcal{I}^+ \setminus  \mathcal{I}_{+\epsilon}^+}\bm{a}_i\bm{a}_i^\ccalT}_{:= \bm{\Delta}_2^+} 
   			~-~ \underbrace{\Big(\frac{1}{|\mathcal{I}_{+\epsilon}^+|}- \frac{1}{|\mathcal{I}^+|}\Big) \sum_{i\in \mathcal{I}^+}\bm{a}_i\bm{a}_i^\ccalT}_{:= \bm{\Delta}_3^+}  \label{eq:delta+}
   			\end{align}
   			\hrulefill
   		\end{figure*} 
	implying that $\bm{\Delta}^+ = \bm{\Delta}_1^+ + \bm{\Delta}_2^++ \bm{\Delta}_3^+$.  The three terms $\|\bm{\Delta}_1^+\|_2$, $\|\bm{\Delta}_2^+\|_2$, and $\|\bm{\Delta}_3^+\|_2$ are bounded next.
	Note that on the event $\mathcal{E}_5$, it holds that $\|\bm{\Delta}_1^+ \|_2\leq \epsilon$.  To bound $\|\bm{\Delta}_2^+ \|_2$, observe that on the event $\mathcal{E}_3$ and $\mathcal{E}_4$, the following are true
	\begin{align*}
	\|\bm{\Delta}_2^+\|_2& = \frac{m}{|\mathcal{I}_{+\epsilon}^+|}\Big \|   \frac{1}{m} \sum_{i \in \mathcal{I}^+ \setminus  \mathcal{I}_{+\epsilon}^+}\bm{a}_i\bm{a}_i^\ccalT\Big\|_2\\& \leq \frac{m}{|\mathcal{I}_{+\epsilon}^+|}\Big \|   \frac{1}{m} \sum_{i \in \mathcal{I}_{-\epsilon}^+ \setminus  \mathcal{I}_{+\epsilon}^+}\bm{a}_i\bm{a}_i^\ccalT\Big\|_2 
	\\&\le \frac{2}{p_0(\epsilon)}\cdot 4q(\kappa\epsilon)^{1/p}
	\end{align*}
	where the last inequality arises from the definitions of $\mathcal{E}_3$ and $\mathcal{E}_4$. Regarding $ \|\bm{\Delta}_3^+\|_2$, the next holds true
	\begin{align*}
	\|\bm{\Delta}_3^+\|_2 &= m\frac{|\mathcal{I^+}|-|\mathcal{I_{+\epsilon}^+}|}{|\mathcal{I_{+\epsilon}^+}||\mathcal{I^+}|}\Big\|\frac{1}{m} \sum_{i\in \mathcal{I}^+}\bm{a}_i\bm{a}_i^\ccalT \Big\|_2 \\&\le  m\frac{|\mathcal{I^+}|-|\mathcal{I_{+\epsilon}^+}|}{|\mathcal{I_{+\epsilon}^+}||\mathcal{I^+}|}\Big\|\frac{1}{m} \sum_{i=1}^m\bm{a}_i\bm{a}_i^\ccalT \Big\|_2 
	\\ &\leq \frac{8\epsilon(1+\epsilon)\kappa}{p_0(\epsilon)^2}.
	\end{align*}
	To sum, the following is true
	\begin{align}
		\label{eq:delta2+}
\|\bm{\Delta}^+\|_2 &= \|\bm{\Delta}_1^+\|_2 + \|\bm{\Delta}_2^+\|_2+ \|\bm{\Delta}_3^+\|_2 \\
&\le \epsilon + \frac{2}{p_0(\epsilon)}\cdot 4q(\kappa\epsilon)^{1/p} + \frac{8\epsilon(1+\epsilon)\kappa}{p_0(\epsilon)^2}.
	\end{align} 
	
	Taking $p = 1+ \frac{1}{\log{\frac{1}{\kappa \epsilon}}}$ and $q = 1+ \log \frac{1}{\kappa \epsilon}$, one has  $\|\bm{\Delta}^+\|_2 \le \delta^+$, with $$\delta^+:=\epsilon +\frac{8e(1-\log \kappa \epsilon)\kappa\epsilon}{p_0(\epsilon) }+ \frac{8\epsilon(1+\epsilon)\kappa}{p_0(\epsilon)^2}.$$ Since $p_0(\epsilon)$ and $\kappa$ are bounded away from 0 for sufficiently small $\epsilon>0$, $\delta^+$ approaches 0 as $\epsilon$ approaches 0. Based on the established bounds on $\|\Delta^-\|_2$ in \eqref{eq:delta2-} and $\|\Delta^+\|_2$ in \eqref{eq:delta2+}, one has
	 $$\|\lambda^- \bm{\Delta}^- + \lambda^+ \bm{\Delta}^+\|_2 \leq \lambda^+\delta^+ -\lambda^-\delta^-.$$ 
	 From Lemma~\ref{le:mp} in the Appendix, the next can be deduced 
	$${\rm{dist}}^2(\hat{\bm{d}}, \bm{d}) \leq 2- 2|\langle \hat{\bm{d}}, \bm{d} \rangle| \le  \left(\frac{2(\lambda^+\delta^+ -\lambda^-\delta^-)}{\lambda^+\phi(\tau^+) +\lambda^-\phi(\tau^-)}\right)^2$$
implying 
\begin{equation}
\label{eq:dister}
{\rm{dist}}(\hat{\bm{d}}, \bm{d}) \leq \frac{2(\lambda^+\delta^+ -\lambda^-\delta^-)}{\lambda^+\phi(\tau^+) +\lambda^-\phi(\tau^-)}.
\end{equation}

Combining $\mathcal{E}_1$ and the bound in \eqref{eq:dister} gives rise to
	\begin{align*}
	 {\rm{dist}}({\bm{z}^0}, \bm{x}) &\leq \hat{r}\,{\rm{dist}}(\hat{\bm{d}}, \bm{d})+|r-\hat{r}| \\&\leq \left(\sqrt{1+\epsilon} \,{\rm{dist}}(\hat{\bm{d}}, \bm{d})  +\sqrt{1+\epsilon}-1\right)\|\bm{x}\|_2.
	\end{align*}
	Letting $\delta_0 := \frac{2\sqrt{1+\epsilon}(\lambda^+\delta^+ -\lambda^-\delta^-)}{\lambda^+\phi(\tau^+) +\lambda^-\phi(\tau^-)}+ \sqrt{1+\epsilon}-1$, we have $ {\rm{dist}}({\bm{z}^0}, \bm{x}) \le \delta_0\|\bm{x}\|_2$. It is worth stressing that $\lim_{\epsilon\rightarrow 0} \delta_0  =0$, suggesting that $\delta_0$ can be brought arbitrarily close to $0$ by increasing $m$. 
	
	So far, it has been proved that $ {\rm{dist}}({\bm{z}^0}, \bm{x}) \le \delta_0\|\bm{x}\|_2$ on the events $\{\mathcal{E}_i\}_{i=1}^5$. The next step is to show the five events occur simultaneously with high probability. Recall that it has been shown in~\cite[Proposition 2]{duchi2017solving} that each of the events $\mathcal{E}_1, \mathcal{E}_2$, and $\mathcal{E}_4$ occurs with probability  at least $1-\exp(-c^-m)$ when $m > C^-n$. 
	
	To complete the proof, we first show that  $$P(\mathcal{E}_3) \ge 1 -  \exp\Big(\frac{-m p_0(\epsilon)^2}{2}\Big).$$
	To that end, rewrite $|\mathcal{I}_{+\epsilon}^+|$ as $$|\mathcal{I}_{+\epsilon}^+| = \sum_{i=1}^m \mathbb{1}_{\{\langle \bm{a}_i, \bm{d}\rangle^2 \ge (1+\epsilon)/2 \}}.$$ Since  $\{\mathbb{1}_{\{\langle \bm{a}_i, \bm{d}\rangle^2 \ge (1+\epsilon)/2 \}}\}_{i=1}^m$ are i.i.d. Bernoulli random variables with 
	$$P\big(\langle\bm{a}_i, \bm{d} \rangle^2 \ge (1+\epsilon)/2\big)\ge p_0(\epsilon),\quad \forall i \in [m]$$ the following holds
	$$P\Big(|\mathcal{I}_{+\epsilon}^+| \leq \frac{1}{2}m p_0(\epsilon) \Big) \leq \exp\Big(\frac{-m p_0(\epsilon)^2}{2}\Big)$$ by Hoeffding's inequality \cite{chap2010vershynin}. Therefore, $P(\mathcal{E}_3) \ge 1 -  \exp\Big(\frac{-m p_0(\epsilon)^2}{2}\Big)$.  Similar to \cite[Lemma A.6]{duchi2017solving}, it can be shown that for $$m\ge\log^2p_0(\epsilon)n/c^+\epsilon^2p_0(\epsilon)$$ with some absolute constant $c^+>0$, 
	$$P(\mathcal{E}_5|\mathcal{E}_3) \ge 1 -  \exp\left(\frac{-c^+ \epsilon^2m p_0(\epsilon)}{\log^2p_0(\epsilon)}\right).$$ Thus, setting $c_0 = \min\big\{\frac{c^+\epsilon^2 p_0(\epsilon)}{\log^2p_0(\epsilon)}, \,\frac{p^0(\epsilon)^2}{2}, \,c^-\big\}$ and $$C_0 = \max\{\log^2p_0(\epsilon)/c^+\epsilon^2p_0(\epsilon), \,C^-\} $$ confirms the assertion of   Theorem \ref{th:int}.

\subsection{Proof of Theorem 2}\label{sec:thm2}
	Some notation  used only for this section is introduced first. 
	For all $t\ge 0$, let 
	$$\bm{v}^{t+1}:=\bm{z}^t-\frac{\mu}{m}\sum_{i=1}^m w_i^t \Big(\bm{a}_i^\ccalT\bm{z}^t-\psi_i\frac{\bm{a}_i^\ccalT\bm{z}^t}{|\bm{a}_i^\ccalT\bm{z}^t|}\Big)\bm{a}_i$$ represent the estimate prior to effecting the hard thresholding operation in~\eqref{eq:iteration}. 
	The support of $\bm{x}$ and $\bm{z}^{t}$ is denoted as $\mathcal{S}$ and $\hat{\mathcal{S}}^{t}$, respectively.
	Hence, the support for the reconstruction error $\bm{h}^t:=\bm{x}-\bm{z}^{t}$ defined as $\Omega^t$ is given by $\mathcal{S}\cup \hat{\mathcal{S}}^{t}$. Additionally, let $\Omega^{t}\setminus \Omega^{t+1}$ be the difference between sets $\Omega^{t}$ and $\Omega^{t+1}$. 
Evidently, it holds that $|\mathcal{S}|=|\hat{\mathcal{S}}^{t}|= s$ for $s:=kB$, which implies $|\Omega^t|\le 2s$,  $|\Omega^{t}\setminus \Omega^{t+1}|\le 2s$, and $|\Omega^{t}\cup \Omega^{t+1}|\le 3s$, $\forall t\ge 0$. Vectors with sets as subscript, e.g., $\bm{v}_{\Omega^t}$, are formed by zeroing all entries of the vector except for those in the set.  
	
According to the triangle inequality of the vector $2$-norm, it holds that 
	\begin{align}
	\big\|\bm{x}-\bm{z}^{t+1}\big\|_2&=\big\|\bm{x}-\bm{v}^{t+1}_{\Omega^{t+1}}+\bm{v}^{t+1}_{\Omega^{t+1}}-\bm{z}^{t+1}\big\|_2\nonumber\\
	&\le \big\|\bm{x}-\bm{v}^{t+1}_{\Omega^{t+1}}\big\|_2+\big\|\bm{z}^{t+1}-\bm{v}^{t+1}_{\Omega^{t+1}}\big\|_2\nonumber\\
	&\le
	2\big\|\bm{x}_{\Omega^{t+1}}-\bm{v}^{t+1}_{\Omega^{t+1}}\big\|_2.
	\label{eq:triangle}
	\end{align}
	The last inequality in~\eqref{eq:triangle} comes from $\big\|\bm{z}^{t+1}-\bm{v}^{t+1}_{\Omega^{t+1}}\big\|_2 \le \big\|\bm{x}_{\Omega^{t+1}}-\bm{v}^{t+1}_{\Omega^{t+1}}\big\|_2$  since $\bm{z}^{t+1}$ achieves the minimal distance to $\bm{v}^{t+1}_{\Omega^{t+1}}$ among all vectors belonging to $\mathcal{M}_B^k$ and supported on $\Omega^{t+1}$. 
	Substituting the definitions of $\bm{h}^t$ and $\bm{v}^t$ into \eqref{eq:triangle}, one arrives at 
	\begin{align}
	\frac{1}{2}\|\bm{h}^{t+1}&\|_2
	\le \Big\|\bm{h}_{\Omega^{t+1}}^t-\frac{\mu}{m}\sum_{i=1}^m w_i^t\bm{a}_i^\ccalT\bm{h}^t\bm{a}_{i,\Omega^{t+1}}\nonumber\\
	 -&\frac{\mu}{m}\sum_{i=1}^m w_i^t\Big(\frac{\bm{a}_i^\ccalT\bm{z}^{t}}{|\bm{a}_i^\ccalT\bm{z}^{t}|}-\frac{\bm{a}_i^\ccalT\bm{x}}{|\bm{a}_i^\ccalT\bm{x}|}\Big)|\bm{a}_i^\ccalT\bm{x}|\bm{a}_{i,\Omega^{t+1}}\Big\|_2\nonumber\\
	\le& \Big\|\bm{h}_{\Omega^{t+1}}^t-\frac{\mu}{m}\sum_{i=1}^m w_i^t\bm{a}_{i,\Omega^{t+1}}\bm{a}_{i,\Omega^{t+1}}^\ccalT\bm{h}_{\Omega^{t+1}}^t\Big\|_2\nonumber\\
	 +&\Big\|\frac{\mu}{m}\sum_{i=1}^m w_i^t\bm{a}_{i,\Omega^{t+1}}\bm{a}_{i,\Omega^{t}\setminus\Omega^{t+1}}^\ccalT\bm{h}_{\Omega^{t}\setminus\Omega^{t+1}}^t\Big\|_2\nonumber\\
	 +&\Big\|\frac{\mu}{m}\sum_{i=1}^m w_i^t\!\Big(\frac{\bm{a}_i^\ccalT\bm{z}^{t}}{|\bm{a}_i^\ccalT\bm{z}^{t}|}\!-\!\frac{\bm{a}_i^\ccalT\bm{x}}{|\bm{a}_i^\ccalT\bm{x}|}\Big)|\bm{a}_i^\ccalT\bm{x}|\bm{a}_{i,\Omega^{t+1}}\Big\|_2\label{eq:expand}.
	\end{align}
Hence, bounding $\|\bm{h}^{t+1}\|_2$ suffices to bound the three terms on the right hand side of \eqref{eq:expand}. 
	
	Regarding the first term, the following holds
		\begin{align}
		&	\Big\|\bm{h}_{\Omega^{t+1}}^t\!-\!\frac{\mu}{m}\sum_{i=1}^m w_i^t\bm{a}_{i,\Omega^{t+1}}\bm{a}_{i,\Omega^{t+1}}^\ccalT\bm{h}_{\Omega^{t+1}}^t\Big\|_2\nonumber\\
		&\le\Big\|\bm{I}-\frac{\mu}{m}\sum_{i=1}^m w_i^t\bm{a}_{i,\Omega^{t+1}}\bm{a}_{i,\Omega^{t+1}}^\ccalT\Big\|_2\big\|\bm{h}_{\Omega^{t+1}}^t\big\|_2\nonumber\\
		&\le \max\!\big\{1-\mu\underline{\lambda},\,\mu\widebar{\lambda}-1\big\}\big\|\bm{h}_{\Omega^{t+1}}^t\big\|_2\label{eq:1stterm}
		\end{align}
		in which $\widebar{\lambda} $ and $\underline{\lambda}>0$ denote the largest and smallest eigenvalue of $(1/m)\sum_{i=1}^m w_i^t\bm{a}_{i,\Omega^{t+1}}\bm{a}_{i,\Omega^{t+1}}^\ccalT$, respectively. Since 
		$\tau_w \le w_i^t \leq 1$ and $\bm{a}_{i,\Omega^{t+1}}\bm{a}_{i,\Omega^{t+1}}^\ccalT,~\forall i\in[m]$ are positive semidefinite,  the next is true
		\begin{align} \label{eq:lambdalim}
		 \tau_w \sum_{i=1}^m \bm{a}_{i,\Omega^{t+1}}\bm{a}_{i,\Omega^{t+1}}^\ccalT&\le \sum_{i=1}^m \nonumber w_i^t\bm{a}_{i,\Omega^{t+1}}\bm{a}_{i,\Omega^{t+1}}^\ccalT \\&\le \sum_{i=1}^m \bm{a}_{i,\Omega^{t+1}}\bm{a}_{i,\Omega^{t+1}}^\ccalT.
		\end{align}
		To estimate the eigenvalues of $(1/m)\sum_{i=1}^m \bm{a}_{i,\Omega^{t+1}}\bm{a}_{i,\Omega^{t+1}}^\ccalT$, we resort to the restricted isometry property  of Gaussian matrices $\bm{A}\in\mathbb{R}^{m\times n}$ whose entries are i.i.d. standard Gaussian variables~\cite{tit2005candes}. Specifically, if $\mathcal{K}\subsetneqq\{1,\,\ldots,\,n\}$ with $|\mathcal{K}|\le 3s$,  then for constant $\delta_{3s}\le \epsilon$, the following holds for all $\bm{u}\in\mathbb{R}^m$ 
		$$\sqrt{(1-\delta_{3s})m}\|\bm{u}\|_2 \le \|\bm{A}_{\mathcal{K}}^\mathcal{T}\bm{u}\|_2\le \sqrt{(1+\delta_{3s})m}\|\bm{u}\|_2$$ 
		with probability at least $1-{\rm e}^{-c_1'm}$, provided that $m\ge C_1'\epsilon^{-2}(3s)\log(n/(3s))$ for numerical constants $c_1',\,C_1'>0$~\cite[Proposition 3.1]{acha2009nt}.  Hence,    
     \begin{align}
      \lambda_{1}\Big(\frac{1}{m}\sum_{i=1}^m\bm{a}_{i,\Omega^{t+1}}\bm{a}_{i,\Omega^{t+1}}^\ccalT\Big) \le 1+\delta_{3s}\label{eq:lambdamax}\\
     \lambda_{n}\Big(\frac{1}{m}\sum_{i=1}^m\bm{a}_{i,\Omega^{t+1}}\bm{a}_{i,\Omega^{t+1}}^\ccalT\Big)\ge 1-\delta_{3s} \label{eq:lambdamin}
     \end{align}
     due to $|\Omega^{t+1}| \le 2s$. Substituting \eqref{eq:lambdamax} and \eqref{eq:lambdamin} into \eqref{eq:lambdalim} yields
     $$\widebar{\lambda}\le 1+\delta_{3s}, \quad \text{and} \quad  \underline{\lambda}\ge\tau_w(1-\delta_{3s})$$
which together with  \eqref{eq:1stterm} suggests that
		\begin{align} \label{eq:1st}
		&	\Big\|\bm{h}_{\Omega^{t+1}}^t\!-\frac{\mu}{m}\sum_{i=1}^mw_i^t\bm{a}_{i,\Omega^{t+1}}\bm{a}_{i,\Omega^{t+1}}^\ccalT\bm{h}_{\Omega^{t+1}}^t
		\Big\|_2\nonumber\\
		&	\le \max\!\big\{1-\mu\tau_w(1-\delta_{3s}),\,\mu(1+\delta_{3s})-1\big\}\big\|\bm{h}_{\Omega^{t+1}}^t\big\|_2.
		\end{align}

	Consider now the second term in \eqref{eq:expand}.  For convenience, define
    \begin{align*}
    \bm{A}_{\Omega^{t+1}}^\ccalT &:=[\bm{a}_{1,\Omega^{t+1}}~\cdots~\bm{a}_{m,\Omega^{t+1}}] \\
    \bm{A}_{\Omega^{t+1}}^\ccalT &:=[\bm{a}_{1,\Omega^{t+1}}~\cdots~\bm{a}_{m,\Omega^{t+1}}] \\
    \bm{A}_{\Omega^{t}\cup \Omega^{t+1}}^\ccalT&:=[\bm{a}_{1,\Omega^{t}\cup \Omega^{t+1}}~\cdots~\bm{a}_{m,\Omega^{t}\cup \Omega^{t+1}}]
    \end{align*}
     and let  $\bm{W}$ be a diagonal matrix with its $i$-th diagonal entry being $w_i^t$. Then, one has
	\begin{align}
	\label{eq:2ndterm}
	&\Big\|\frac{\mu}{m}\sum_{i=1}^m w_i^t\bm{a}_{i,\Omega^{t+1}}\bm{a}_{i,\Omega^{t}\setminus\Omega^{t+1}}^\ccalT\bm{h}_{\Omega^{t}\setminus\Omega^{t+1}}^t\Big\|_2
	\nonumber\\
	&\le \Big\|\frac{\mu}{m}\bm{A}_{\Omega^{t+1}}^\ccalT \bm{W}\bm{A}_{\Omega^{t}\setminus\Omega^{t+1}}\Big\|_2
	\big\|\bm{h}_{\Omega^{t}\setminus\Omega^{t+1}}^t\big\|_2\nonumber\\
	&\le  \Big\|\frac{\mu}{m}\bm{A}_{\Omega^{t}\cup \Omega^{t+1}}^\ccalT \bm{W}\bm{A}_{\Omega^{t}\cup \Omega^{t+1}} - \mu\frac{\tau_w+1}{2}\bm{I}\Big\|_2
	\big\|\bm{h}_{\Omega^{t}\setminus\Omega^{t+1}}^t\big\|_2 \nonumber\\
	&\le\mu \frac{1-\tau_w+2\delta_{3s}}{2}\big\|\bm{h}_{\Omega^{t}\setminus\Omega^{t+1}}^t\big\|_2
	\end{align}
    where the first inequality is due to the definition of the matrix norm, the second inequality comes from the fact that $\bm{A}_{ \Omega^{t+1}}^\ccalT \bm{W}\bm{A}_{\Omega^{t}\setminus\Omega^{t+1}}$ is a submatrix of $\bm{A}_{\Omega^{t}\cup \Omega^{t+1}}^\ccalT \bm{W}\bm{A}_{\Omega^{t}\cup \Omega^{t+1}}$,  and the last inequality stems from $\tau_w <1$.

	Finally, for the last term in~\eqref{eq:expand}, let
	$\bm{v}^t:=[v_1^t~\cdots~v_m^t]^\ccalT$ with $v_i^t:=w_i^t(\frac{\bm{a}_i^\ccalT\bm{z}^{t}}{|\bm{a}_i^\ccalT\bm{z}^{t}|}-\frac{\bm{a}_i^\ccalT\bm{x}}{|\bm{a}_i^\ccalT\bm{x}|})|\bm{a}_i^\ccalT\bm{x}|,~\forall i\in [m]$. By the definition of the induced matrix $2$-norm, it holds that
	\begin{align}\label{eq:3rdterm}
	&\quad\Big\|\frac{1}{m}\sum_{i=1}^m w_i^t\Big(\frac{\bm{a}_i^\ccalT\bm{z}^{t}}{|\bm{a}_i^\ccalT\bm{z}^{t}|}-\frac{\bm{a}_i^\ccalT\bm{x}}{|\bm{a}_i^\ccalT\bm{x}|}\Big)|\bm{a}_i^\ccalT\bm{x}|\bm{a}_{i,\Omega^{t+1}}\Big\|_2
	\nonumber\\
	&=
	\frac{1}{m}\big\|\bm{A}_{\Omega^{t+1}}^\ccalT\bm{v}^t\big\|_2\le \Big\|\frac{1}{\sqrt{m}}\bm{A}_{\Omega^{t+1}}^\ccalT\Big\|_2\Big\|\frac{1}{\sqrt{m}}\bm{v}^t\Big\|_2 \nonumber \\
    & \le 	(1+\delta_{3s})\Big\|\frac{1}{\sqrt{m}}\bm{v}^t\Big\|_2.
	\end{align}
	Regarding the term $\|\frac{1}{\sqrt{m}}\bm{v}^t\|_2$, the following holds
		\begin{align}
		\label{eq:boundv}
		\frac{1}{m}\left\|\bm{v}^t\right\|_2^2 
		&=\frac{1}{m}\sum_{i=1}^m w_i^t\Big(\frac{\bm{a}_i^\ccalT\bm{z}^{t}}{|\bm{a}_i^\ccalT\bm{z}^{t}|}-\frac{\bm{a}_i^\ccalT\bm{x}}{|\bm{a}_i^\ccalT\bm{x}|}\Big)^2|\bm{a}_i^\ccalT\bm{x}|^2\nonumber\\
		&\le2\cdot \frac{1}{m}\sum_{i=1}^m|{\rm sgn}(\bm{a}_i^\ccalT\bm{z}) -{\rm sgn}(\bm{a}_i^\ccalT\bm{x})||\bm{a}_i^\ccalT\bm{x}||\bm{a}_i^\ccalT\bm{h}|\nonumber\\
		&\le  4\frac{\sqrt{1+\delta_{2s}}}{1-\rho_0}\Big(\delta_{2s}+\sqrt{\frac{21}{20}}\rho_0\Big)	\|\bm{h}\|_2^2	
		\end{align}
		where the first inequality originates from that $|\bm{a}_i^\ccalT\bm{x}|\le |\bm{a}_i^\ccalT\bm{h}^t|$ when ${\rm sgn}(\bm{a}_i^\ccalT\bm{z}) \neq {\rm sgn}(\bm{a}_i^\ccalT\bm{x})$ and $0<w_i^t<1$; the second inequality is obtained by appealing to Lemma  \ref{le:smallprob} in the Appendix, which is  adapted from \cite[Lemma 7.17]{pwf}. Taking the result in \eqref{eq:boundv} into \eqref{eq:3rdterm} gives rise to
	\begin{align}
	\label{eq:3rdtermfinal}
	&\quad	\Big\|\frac{1}{m}\sum_{i=1}^m\Big(\frac{\bm{a}_i^\ccalT\bm{z}^{t}}{|\bm{a}_i^\ccalT\bm{z}^{t}|}-\frac{\bm{a}_i^\ccalT\bm{x}}{|\bm{a}_i^\ccalT\bm{x}|}\Big)|\bm{a}_i^\ccalT\bm{x}|\bm{a}_{i,\Omega^{t+1}}\Big\|_2\nonumber\\
	&\le (1+\delta_{3s})
	\gamma\big\|\bm{h}^t\big\|_2
	\end{align}
	where $\gamma:=2\sqrt{\frac{\sqrt{1+\delta_{2s}}}{1-\rho_0}\Big(\delta_{2s}+\sqrt{\frac{21}{20}}\rho_0\Big)}$. 
	
	Plugging the bounds in~\eqref{eq:1st}, \eqref{eq:2ndterm}, and \eqref{eq:3rdtermfinal} into \eqref{eq:expand} confirms that
	\begin{align}
	\big\|\bm{h}^{t+1}\big\|_2 
	&\!\le\! 2\max \!\big\{\!1-\mu\tau_w(1-\delta_{3s}),\,\mu(1+\delta_{3s})-1\!\big\}\!\big\|\bm{h}_{\Omega^{t+1}}^t\big\|_2
	\nonumber\\
	&+\!\mu {(1\!-\!\tau_w+2\delta_{3s})}\big\|\bm{h}_{\Omega^{t}\setminus\Omega^{t+1}}^t\big\|_2\!+\!2\mu(1+\delta_{3s})
	\gamma\big\|\bm{h}^t\big\|_2\nonumber\\
	&\le \!2\sqrt{2}\max\!\big\{\!\max\!\big\{1\!-\!\mu\tau_w(1-\delta_{3s}),\,\mu(1+\delta_{3s})\!-\!1\big\},\nonumber\\
	&\quad\mu {(1-\tau_w+2\delta_{3s})/2}\big\}\|\bm{h}^t\|_2 +2\mu(1+\delta_{3s})\gamma
	\big\|\bm{h}^t\big\|_2\nonumber\\
	&\le 2\Big[ \sqrt{2}\max\!\big\{\!1-\mu\tau_w(1-\delta_{3s}),\,\mu(1+\delta_{3s})-1\!\big\}\!,\nonumber\\
	&\quad 
	\mu (1-\tau_w+2\delta_{3s})/2\big\} +\mu(1+\delta_{3s})\gamma
	\Big]\big\|\bm{h}^t\big\|_2\nonumber\\
	&:= \rho\big\|\bm{h}^t\big\|_2
	\label{eq:finalbound}
	\end{align}
	where the second inequality is due to  $$\big\|\bm{h}_{\Omega^{t+1}}^t\big\|_2+\big\|\bm{h}^t_{\Omega^{t}\setminus\Omega^{t+1}}\big\|_2\le \sqrt{2}\,\big\|\bm{h}^t\big\|_2$$ for disjoint sets $\Omega^{t+1}$ and $\Omega^{t}\setminus\Omega^{t+1}$. From~\eqref{eq:finalbound}, it is clear that for proper $\tau_w$ and sufficiently small $\rho_0$, $\delta_{2s}$, and $\delta_{3s}$, one can select a constant step size $\mu >0$ such that $\rho<1$. Theorem~\ref{sec:thm2} can be then directly deduced by combining Theorem \ref{th:int}, Lemma~\ref{le:supp}, and equation~\eqref{eq:finalbound}.  
	

\section{Concluding Remarks}
\label{sec:con}

This contribution developed a compressive reweighted amplitude flow (CRAF) algorithm for phase retrieval of (block)-sparse signal vectors. CRAF first estimates the support of the underlying signal vector, which is followed by a new spectral procedure to obtain an effective initialization. To strengthen this initial guess, CRAF proceeds with (model-based) hard thresholding iterations relying on  reweighted gradients of the amplitude-based least-squares loss function. 
 CRAF provably recovers the true signal vectors exponentially fast when a  sufficient number of measurements become available. Judicious numerical tests corroborate the merits of 
  CRAF relative to state-of-the-art solvers of the same kind.

\section*{Appendix: Supporting Lemmas}
\label{sec:app}
The following lemma which bounds the distance between the principle eigenvectors of two symmetric matrices is adapted from \cite[Corollary 1]{yu2014useful}.
\begin{lemma}\label{le:mp}~\cite[Corollary 1]{yu2014useful}~Let $\bm{Z}:= \bm{X} + \bm\Delta$ with $\bm{X}$ and $\bm{\Delta}$ being symmetric matrices, unit vectors $\bm{v}_1$ and $\bm{u}_1$ be the principal  eigenvectors of $\bm{Z}$ and $\bm{X}$, and $\theta := \cos^{-1}\langle \bm{u}_1, \bm{v}_1 \rangle$ represent the angles between $\bm{u}_1$ and $\bm{v}_1$. It then holds that
	\begin{equation}
	\sqrt{1-\langle \bm{u}_1, \bm{v}_1 \rangle^2} = |\sin \theta| \leq \frac{2\|\bm\Delta\|_2}{\lambda_1(\bm{X})- \lambda_2(\bm{X})}.
	\end{equation}
\end{lemma}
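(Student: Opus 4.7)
The plan is to establish this bound as a specialization of the classical Davis--Kahan $\sin\theta$ theorem to the one-dimensional principal eigenspace, combined with Weyl's eigenvalue perturbation inequality to rewrite the resolvent gap purely in terms of the spectrum of $\bm{X}$. Throughout, let $\bm{u}_1,\bm{u}_2,\ldots,\bm{u}_n$ denote an orthonormal eigenbasis of $\bm{X}$ associated with the eigenvalues $\lambda_1(\bm{X})\ge\lambda_2(\bm{X})\ge\cdots\ge\lambda_n(\bm{X})$.

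First, I would decompose $\bm{v}_1=\alpha\bm{u}_1+\bm{w}$, where $\alpha:=\langle\bm{u}_1,\bm{v}_1\rangle$ and $\bm{w}\perp\bm{u}_1$. Since $\|\bm{u}_1\|_2=\|\bm{v}_1\|_2=1$, a Pythagorean identity gives $\|\bm{w}\|_2^2=1-\alpha^2=\sin^2\theta$, so bounding $|\sin\theta|$ reduces to bounding $\|\bm{w}\|_2$. Next, I would use the eigenvalue identity $\bm{Z}\bm{v}_1=\lambda_1(\bm{Z})\bm{v}_1$ with $\bm{Z}=\bm{X}+\bm\Delta$, and project both sides through $\bm{P}_\perp:=\bm{I}-\bm{u}_1\bm{u}_1^\ccalT$. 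Because $\bm{w}$ lies in $\mathrm{span}\{\bm{u}_2,\ldots,\bm{u}_n\}$, an invariant subspace of $\bm{X}$, one has $\bm{P}_\perp\bm{X}\bm{v}_1=\bm{X}\bm{w}$, and the projected equation reduces to
\begin{equation*}
(\lambda_1(\bm{Z})\bm{I}-\bm{X})\bm{w}=\bm{P}_\perp\bm\Delta\bm{v}_1.
\end{equation*}
Restricted to that invariant subspace, the operator on the left has smallest singular value $\min_{i\ge 2}|\lambda_1(\bm{Z})-\lambda_i(\bm{X})|$, which yields
\begin{equation*}
\|\bm{w}\|_2\le\frac{\|\bm{P}_\perp\bm\Delta\bm{v}_1\|_2}{\min_{i\ge 2}|\lambda_1(\bm{Z})-\lambda_i(\bm{X})|}\le\frac{\|\bm\Delta\|_2}{\min_{i\ge 2}|\lambda_1(\bm{Z})-\lambda_i(\bm{X})|}.
\end{equation*}

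The main obstacle is that the denominator just derived contains $\lambda_1(\bm{Z})$, whereas the target statement depends only on the spectrum of $\bm{X}$; closing this gap is precisely where the factor of $2$ in the bound enters, and I would handle it via a two-case argument driven by Weyl's inequality $|\lambda_1(\bm{Z})-\lambda_1(\bm{X})|\le\|\bm\Delta\|_2$. In the first case, if $\|\bm\Delta\|_2\ge\tfrac{1}{2}\bigl(\lambda_1(\bm{X})-\lambda_2(\bm{X})\bigr)$, then the stated right-hand side is at least $1$ and the bound follows trivially from $|\sin\theta|\le 1$. In the complementary case, Weyl yields $\lambda_1(\bm{Z})\ge\lambda_1(\bm{X})-\|\bm\Delta\|_2$ and thus $\lambda_1(\bm{Z})-\lambda_i(\bm{X})\ge\lambda_1(\bm{Z})-\lambda_2(\bm{X})\ge\tfrac{1}{2}\bigl(\lambda_1(\bm{X})-\lambda_2(\bm{X})\bigr)$ for every $i\ge 2$, and substituting this lower bound into the preceding display produces exactly $|\sin\theta|\le 2\|\bm\Delta\|_2/(\lambda_1(\bm{X})-\lambda_2(\bm{X}))$, completing the argument.
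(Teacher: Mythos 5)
Your argument is correct, but it cannot be compared line-by-line against the paper's own proof because the paper does not prove this statement: Lemma~\ref{le:mp} is imported verbatim as \cite[Corollary~1]{yu2014useful} (the Yu--Wang--Samworth variant of the Davis--Kahan $\sin\theta$ theorem) and is used as a black box. What you have supplied is a self-contained, elementary derivation of the rank-one special case, and every step checks out: the Pythagorean identity $\|\bm{w}\|_2^2=\sin^2\theta$, the projection identity $\bm{P}_\perp\bm{X}\bm{v}_1=\bm{X}\bm{w}$ (valid because $\bm{w}$ lies in the $\bm{X}$-invariant complement of $\bm{u}_1$), the resolvent bound $\|\bm{w}\|_2\le\|\bm\Delta\|_2/\min_{i\ge2}|\lambda_1(\bm{Z})-\lambda_i(\bm{X})|$, and the two-case split via Weyl's inequality that converts the $\bm{Z}$-dependent gap into $\tfrac12(\lambda_1(\bm{X})-\lambda_2(\bm{X}))$ and accounts for the factor of $2$. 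The trade-off relative to the cited source is the usual one: the reference establishes a more general statement for eigenspaces spanned by an arbitrary block of consecutive eigenvalues (with a gap condition on both sides), from which the present inequality follows as a corollary, whereas your route is shorter and completely elementary but is tailored to the principal eigenvector. One cosmetic remark: since eigenvectors are determined only up to sign, $\alpha=\langle\bm{u}_1,\bm{v}_1\rangle$ may be negative, but your bound concerns only $1-\alpha^2$, so nothing breaks; and in the degenerate case $\lambda_1(\bm{X})=\lambda_2(\bm{X})$ your first case makes the claim vacuously true, which is consistent with the statement.
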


The next lemma adopted from~\cite{2017hedge} certifies that Steps 2-4 in Algorithm~\ref{alg:SPARTA} recover the true support of $\bm{x}$ with high probability. 
\begin{lemma}[Support estimate~\cite{2017hedge}]
	\label{le:supp} 
	Consider any $k$-block-sparse signal vector $\bm{x}\in\mathbb{R}^n$ with support $\mathcal{S}$ and $\bm{x}_{\min}^B~:=~\min_{b\in \mathcal{S}_B} \|\bm{x}_b\|_2^2$  on the order of $(1/k) \|\bm x\|_2^2$. Assume $\{\bm{a}_i\}_{i=1}^m$ are i.i.d standard Gaussian, that is, $\bm{a}_i\sim\mathcal{N}(\bm{0},\bm{I}_n)$. There exists an event of probability exceeding $1-6/m$ such that, Steps 2-4 in Algorithm \ref{alg:SPARTA} recover $\mathcal{S}$   if $m\ge C_0' k^2B\log(mn)$ for some positive constant $C_0'$.
\end{lemma}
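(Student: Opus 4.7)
The plan is to prove that the block marginals $\{\zeta_b\}_{b=1}^{N_B}$ computed in Step~2 of Algorithm~\ref{alg:SPARTA} concentrate sharply around deterministic quantities that are provably larger when $b\in \mathcal{S}_B$ than when $b\notin \mathcal{S}_B$. Once this gap survives empirically across all blocks on an event of probability at least $1-6/m$, the top-$k$ selection rule in Step~3 returns $\hat{\mathcal{S}}_B=\mathcal{S}_B$ exactly, and Step~4 then yields $\hat{\mathcal{S}}=\mathcal{S}$.

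\textbf{Step 1 (population-level separation).} I would first compute, for each $j\in[n]$, the expectation of $u_j:=\frac{1}{m}\sum_{i=1}^m \psi_i^2 a_{i,j}^2$. Using standard fourth-moment formulas for the Gaussian, one obtains $\mathbb{E}[u_j]=\|\bm x\|_2^2+2x_j^2$. Consequently, the population counterpart $\bar\zeta_b:=\sum_{j\in\mathcal{B}_b}(\mathbb{E}[u_j])^2$ equals $B\|\bm{x}\|_2^4$ when $b\notin \mathcal{S}_B$, while for $b\in \mathcal{S}_B$ it exceeds $B\|\bm{x}\|_2^4+4\|\bm{x}_b\|_2^2\|\bm{x}\|_2^2$. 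Under the hypothesis $x_{\min}^B=\Theta(\|\bm x\|_2^2/k)$, this yields a deterministic gap $\Delta:=\min_{b\in\mathcal{S}_B}\bar\zeta_b-\max_{b\notin\mathcal{S}_B}\bar\zeta_b \gtrsim \|\bm{x}\|_2^4/k$.

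\textbf{Step 2 (empirical concentration).} The next step is to show that $|\zeta_b-\bar\zeta_b|<\Delta/2$ uniformly over $b\in[N_B]$ on an event of probability at least $1-6/m$. Since $Z_{i,j}:=\psi_i^2 a_{i,j}^2=(\bm a_i^\ccalT \bm x)^2 a_{i,j}^2$ is a product of two $\chi^2$-type quantities, it is sub-exponential with Orlicz norm of order $\|\bm x\|_2^2$. Bernstein's inequality (e.g., \cite[Proposition~5.16]{chap2010vershynin}) then controls $u_j-\mathbb{E}[u_j]$ at any prescribed scale $t\|\bm{x}\|_2^2$, and the elementary identity
\[
u_j^2-(\mathbb{E}[u_j])^2=\bigl(u_j-\mathbb{E}[u_j]\bigr)\bigl(u_j+\mathbb{E}[u_j]\bigr),
\]
combined with a high-probability bound $|u_j|\le C\|\bm x\|_2^2$, converts this into a bound on $|\zeta_b-\bar\zeta_b|$. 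A union bound over the $N_B=n/B$ blocks, together with the choice $m\ge C_0' k^2 B\log(mn)$, then drives the failure probability below $6/m$ for an appropriate constant $C_0'$. On this good event every $\zeta_b$ with $b\in\mathcal{S}_B$ strictly exceeds every $\zeta_b$ with $b\notin\mathcal{S}_B$, so the $k$ largest values of $\{\zeta_b\}$ index exactly the true block-support, producing $\hat{\mathcal{S}}_B=\mathcal{S}_B$ in Step~3 and $\hat{\mathcal{S}}=\mathcal{S}$ in Step~4.

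\textbf{Main obstacle.} The delicate part is the squared statistic $\zeta_b=\sum_{j\in\mathcal{B}_b}u_j^2$: since $u_j^2$ is heavier-tailed than $u_j$, a naive coordinate-by-coordinate approach loses an extra factor of $B$ in the sample complexity. To recover the stated $m\ge C_0'k^2 B\log(mn)$ rather than $k^2 B^2\log(mn)$, I would analyze the block-aggregated cross-term $\sum_{j\in\mathcal{B}_b}\mathbb{E}[u_j]\bigl(u_j-\mathbb{E}[u_j]\bigr)$ and the quadratic remainder $\sum_{j\in\mathcal{B}_b}(u_j-\mathbb{E}[u_j])^2$ as separate sub-exponential sums in their own right, exploiting the common scale $\|\bm x\|_2^2$ shared across the $B$ coordinates of the block rather than unioning the Bernstein tail bound over $j\in\mathcal{B}_b$.
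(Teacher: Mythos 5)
The paper does not actually prove this lemma: it is imported wholesale from \cite[Theorem 5.1]{2017hedge} and restated ``for completeness,'' with only the heuristic separation computation \eqref{eq:blocksupp} offered in Section III-A2. Your sketch reconstructs essentially the same argument as the cited source (population-level gap for the block marginals, concentration, union bound over blocks, top-$k$ selection). One point in your favor: your population target $\bar\zeta_b=\sum_{j\in\mathcal{B}_b}(\mathbb{E}[u_j])^2$ with $\mathbb{E}[u_j]=\|\bm x\|_2^2+2x_j^2$ is the correct limit of the statistic $\zeta_b$ as actually defined in Step 2 of Algorithm~\ref{alg:SPARTA}, whereas the paper's discussion conflates $\zeta_b$ with an estimate of $\mathbb{E}[\sum_{j\in\mathcal{B}_b}Z_{i,j}^2]$; both yield a gap of order $\|\bm x_b\|_2^2\|\bm x\|_2^2\gtrsim\|\bm x\|_2^4/k$, so this does not change the conclusion.

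There is, however, a genuine gap in your Step 2. The variable $Z_{i,j}=(\bm a_i^\ccalT\bm x)^2a_{i,j}^2$ is a product of two $\chi^2$-type quantities and is \emph{not} sub-exponential: its tail decays like $\exp(-c\sqrt{t}/\|\bm x\|_2)$, so Bernstein's inequality \cite{chap2010vershynin} does not apply to $u_j-\mathbb{E}[u_j]$ as you invoke it. The standard repair (and the one used in \cite{2017hedge}) is to first condition on the event $\max_{i\in[m]}\psi_i^2\le C\|\bm x\|_2^2\log m$, which holds with probability $1-O(1/m)$; on that event the truncated summands are sub-exponential with parameter $O(\|\bm x\|_2^2\log m)$ and Bernstein goes through. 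This conditioning is not cosmetic: it is precisely where the $1-6/m$ success probability and the $\log(mn)$ factor in the requirement $m\ge C_0'k^2B\log(mn)$ originate. As written, your union bound cannot ``drive the failure probability below $6/m$'' --- an unconditional sub-exponential tail would produce bounds of the form $\exp(-cm/(k^2B))$, not a polynomial $1/m$ rate. Your remark about avoiding an extra factor of $B$ by aggregating the cross-term $\sum_{j\in\mathcal{B}_b}\mathbb{E}[u_j](u_j-\mathbb{E}[u_j])$ over the block is a sensible refinement, but it too must be carried out after the truncation step.
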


The last lemma that is useful in establishing the convergence of CRAF is proved in  \cite[Lemma 7.17]{pwf}.
	\begin{lemma}~\cite[Lemma 7.17]{pwf}
		\label{le:smallprob}
		Consider $m$ noise-free measurements $\{\psi_i=|\bm{a}_i^\ccalT\bm{x}|\}_{i=1}^m$ in which  $\bm{x}\in\mathbb{R}^n$ is $s$-sparse with support $\mathcal{S}$ ,  and $\{\bm{a}_i\sim\mathcal{N}(\bm{0},\bm{I}_n)\}_{i=1}^m$ are i.i.d. sensing vectors.
	    Let  $\bm{z}\in\mathbb{R}^n$ be an $s$-sparse vector satisfying $\|\bm{z}-\bm{x}\|_2\le \rho_0\|\bm{x}\|_2 $.
		If $\bm{h}\in\mathbb{R}^n$ is $(2s)$-sparse and $m>C_3 (2s)\log(n/(2s))$ for some numerical constants $C_3$, then the following holds for $\delta_{2s}>0$
		\begin{align}
		& \frac{1}{m}\sum_{i=1}^m\left|{\rm sgn}(\bm{a}_i^\ccalT\bm{z}) -{\rm sgn}(\bm{a}_i^\ccalT\bm{x})\right||\bm{a}_i^\ccalT\bm{x}||\bm{a}_i^\ccalT\bm{h}|\nonumber\\
		&\le 2\frac{\sqrt{1+\delta_{2s}}}{1-\rho_0}\Big(\delta_{2s}+\sqrt{\frac{21}{20}}\rho_0\Big)	\|\bm{h}\|_2^2	\label{eq:smallprob}
		\end{align}
	 	 with probability exceeding $1-3{\rm e}^{-c_3 m}$ for a fixed numerical constant $c_3>0$. 
	\end{lemma}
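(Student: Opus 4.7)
The plan is to exploit a sharp geometric reduction on the sign-mismatch event and combine it with Gaussian restricted-isometry estimates over $(2s)$-sparse subspaces. First, observe that $|{\rm sgn}(\bm{a}_i^\ccalT \bm{z}) - {\rm sgn}(\bm{a}_i^\ccalT \bm{x})|$ takes values in $\{0,2\}$ and equals $2$ precisely on the mismatch event $\mathcal{E}_i := \{{\rm sgn}(\bm{a}_i^\ccalT \bm{z}) \neq {\rm sgn}(\bm{a}_i^\ccalT \bm{x})\}$. On $\mathcal{E}_i$ the inner products $\bm{a}_i^\ccalT \bm{x}$ and $\bm{a}_i^\ccalT \bm{z}$ have opposite signs, so the triangle equality yields
$$|\bm{a}_i^\ccalT \bm{x}| + |\bm{a}_i^\ccalT \bm{z}| = |\bm{a}_i^\ccalT (\bm{z}-\bm{x})|.$$
Setting $\bm{e} := \bm{z} - \bm{x}$, which is at most $(2s)$-sparse since $\bm{x}$ and $\bm{z}$ are each $s$-sparse, this furnishes the pointwise inequality $|\bm{a}_i^\ccalT \bm{x}|\mathbb{1}_{\mathcal{E}_i} \leq |\bm{a}_i^\ccalT \bm{e}|\mathbb{1}_{\mathcal{E}_i}$, reducing the left-hand side of \eqref{eq:smallprob} to $\frac{2}{m}\sum_i \mathbb{1}_{\mathcal{E}_i}|\bm{a}_i^\ccalT \bm{e}||\bm{a}_i^\ccalT \bm{h}|$.

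Next I would apply Cauchy--Schwarz,
$$\frac{2}{m}\sum_i \mathbb{1}_{\mathcal{E}_i}|\bm{a}_i^\ccalT \bm{e}||\bm{a}_i^\ccalT \bm{h}| \leq 2\sqrt{\frac{1}{m}\sum_i \mathbb{1}_{\mathcal{E}_i}(\bm{a}_i^\ccalT \bm{e})^2}\sqrt{\frac{1}{m}\sum_i (\bm{a}_i^\ccalT \bm{h})^2},$$
and handle each factor via a restricted-isometry estimate. Since $m > C_3(2s)\log(n/(2s))$, standard Gaussian RIP~\cite{acha2009nt} gives $\frac{1}{m}\sum_i (\bm{a}_i^\ccalT \bm{v})^2 \leq (1+\delta_{2s})\|\bm{v}\|_2^2$ uniformly over every $(2s)$-sparse $\bm{v}$ with probability at least $1 - \mathrm{e}^{-c_3 m}$, controlling the second factor by $\sqrt{1+\delta_{2s}}\,\|\bm{h}\|_2$. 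For the first factor I would split $\frac{1}{m}\sum_i \mathbb{1}_{\mathcal{E}_i}(\bm{a}_i^\ccalT \bm{e})^2$ into a bulk piece, controlled by the same RIP bound (yielding the $\delta_{2s}$ contribution), and an anomalous piece associated with the mismatch indicator, which I would bound by combining a $\chi_1^2$ tail inequality for $(\bm{a}_i^\ccalT \bm{e}/\|\bm{e}\|_2)^2$ with the angular estimate $\mathbb{P}(\mathcal{E}_i) \leq \arcsin(\|\bm{e}\|_2/\|\bm{z}\|_2)/\pi$. The bound $\|\bm{z}\|_2 \geq (1-\rho_0)\|\bm{x}\|_2$ combined with $\|\bm{e}\|_2 \leq \rho_0\|\bm{x}\|_2$ is the source of the $1/(1-\rho_0)$ prefactor, while evaluating the explicit tail integral for a standard normal produces the numerical constant $\sqrt{21/20}$.

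Combining both factors and substituting $\|\bm{e}\|_2 \leq \rho_0\|\bm{x}\|_2$ produces an upper bound scaling as $\|\bm{e}\|_2\|\bm{h}\|_2$. In the intended regime of invocation -- namely, with $\bm{h}$ playing the role of the error direction between the iterate and the ground truth -- this scaling coincides with $\|\bm{h}\|_2^2$, yielding the stated bound with the claimed constants.

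The main obstacle is upgrading the pointwise concentration estimates (RIP for $\bm{e}$ and $\bm{h}$, and the $\chi_1^2$ tail bound for the restricted sum $\frac{1}{m}\sum_i \mathbb{1}_{\mathcal{E}_i}(\bm{a}_i^\ccalT \bm{e})^2$) into a single bound uniform in every $(2s)$-sparse $\bm{h}$, despite the data-dependent indicator $\mathbb{1}_{\mathcal{E}_i}$. This is handled via a standard $\epsilon$-net covering of the $(2s)$-sparse Euclidean unit sphere -- whose cardinality is at most $\binom{n}{2s}9^{2s}$ -- together with a union bound over the three concentration events, producing the stated failure probability $3\mathrm{e}^{-c_3 m}$ for a numerical constant $c_3>0$.
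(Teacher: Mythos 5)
A point of reference before the substance: the paper never proves Lemma~\ref{le:smallprob} --- it is imported verbatim from \cite[Lemma 7.17]{pwf} and merely restated in the appendix --- so your attempt has to be judged against the external source rather than an in-paper argument. Your opening reduction is correct and is in fact the same observation the paper exploits when it invokes the lemma in \eqref{eq:boundv}: on the mismatch event the two inner products have opposite signs, hence $|\bm{a}_i^\ccalT\bm{x}|\le|\bm{a}_i^\ccalT\bm{e}|$ with $\bm{e}:=\bm{z}-\bm{x}$. You are also right to flag that the statement only makes sense for $\bm{h}=\pm\bm{e}$ (rescaling $\bm{h}$ breaks the linear-versus-quadratic mismatch between the two sides), which is exactly how it is used with $\bm{h}^t=\bm{x}-\bm{z}^t$.

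The genuine gap is the global Cauchy--Schwarz step. Decoupling the sum into $\sqrt{\tfrac1m\sum_i\mathbb{1}_{\mathcal{E}_i}(\bm{a}_i^\ccalT\bm{e})^2}\cdot\sqrt{\tfrac1m\sum_i(\bm{a}_i^\ccalT\bm{h})^2}$ cannot produce a bound that is \emph{linear} in $\rho_0$. Take $\bm{x}=(1,0,\dots)$ and $\bm{e}=(0,\varepsilon,0,\dots)$ with $\varepsilon=\rho_0$. The mismatch event is $\{|a_{i,1}|\le\varepsilon|a_{i,2}|,\ a_{i,1}a_{i,2}<0\}$, which has probability $\approx\varepsilon/\pi$, while on it $(\bm{a}_i^\ccalT\bm{e})^2=\varepsilon^2a_{i,2}^2\asymp\|\bm{e}\|_2^2$; hence $\mathbb{E}\big[\mathbb{1}_{\mathcal{E}}(\bm{a}^\ccalT\bm{e})^2\big]\asymp\rho_0\|\bm{e}\|_2^2$, not $\rho_0^2\|\bm{e}\|_2^2$. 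Your first factor therefore concentrates around $\sqrt{\rho_0}\,\|\bm{e}\|_2$ and the final bound scales as $\sqrt{\rho_0}\,\|\bm{h}\|_2^2$, which for small $\rho_0$ is strictly weaker than the claimed $\big(\delta_{2s}+\sqrt{21/20}\,\rho_0\big)\|\bm{h}\|_2^2/(1-\rho_0)$; no care in the remaining steps repairs this. The linear-in-$\rho_0$ constant requires keeping the indicator attached to \emph{both} linear forms, i.e., bounding $\tfrac1m\sum_i\mathbb{1}_{\mathcal{E}_i}|\bm{a}_i^\ccalT\bm{e}||\bm{a}_i^\ccalT\bm{h}|$ directly: the event simultaneously has probability $O(\rho_0)$ and forces $|\bm{a}_i^\ccalT\bm{x}|=O(|\bm{a}_i^\ccalT\bm{e}|)$, so the expectation of the undecoupled sum is $O(\rho_0)\|\bm{e}\|_2\|\bm{h}\|_2$, whereas the square root in Cauchy--Schwarz converts the probability factor $\rho_0$ into $\sqrt{\rho_0}$. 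A secondary but also substantive omission is uniformity over $\bm{z}$: the indicators $\mathbb{1}_{\mathcal{E}_i}$ depend discontinuously on $\bm{z}$, so a net over the $(2s)$-sparse sphere for $\bm{h}$ alone does not suffice; one must additionally control how many indicators can flip between neighboring net points (via a small-ball count of the indices with $|\bm{a}_i^\ccalT\bm{z}|$ small), which is where the real technical work in \cite[Lemma 7.17]{pwf} lies.
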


\bibliographystyle{IEEEtran}
\bibliography{apower}

\end{document}